\documentclass[a4paper,11pt]{article}
\usepackage{fullpage}

\usepackage[english]{babel}
\usepackage[T1]{fontenc}
\usepackage{hyphenat}

\usepackage{amsthm, amsmath, amssymb}
\usepackage{mathtools}
\usepackage{enumitem}
\usepackage{nicefrac}
\usepackage{float}
\usepackage{color}
\usepackage{bbm}
\usepackage[margin=1.5cm, font=small]{caption}
\usepackage{subcaption}
\usepackage{algorithm}
\usepackage[noend]{algpseudocode}
\let\Algorithm\algorithm
\renewcommand\algorithm[1][]{\Algorithm[#1]\setstretch{1.15}\small}
\newcounter{procedure}
\makeatletter
\newenvironment{procedure}[1][htb]{%
  \let\c@algorithm\c@procedure
  \renewcommand{\ALG@name}{\textbf{Procedure}}
  \begin{algorithm}[#1]%
  }{\end{algorithm}
}
\makeatother
\usepackage{makecell}
\usepackage{booktabs}
\usepackage{varwidth}
\usepackage{array}
\usepackage{float}
\usepackage{csquotes}
\usepackage[sort&compress,numbers]{natbib}
\makeatletter
\def\NAT@def@citea{\def\@citea{\NAT@separator}}%
\makeatother
\usepackage{comment}
\usepackage{graphicx}
\usepackage{cellspace}
\usepackage{setspace}
\setlength\cellspacetoplimit{2pt}
\setlength\cellspacebottomlimit{2pt}

\usepackage[hypertexnames=false]{hyperref}
\hypersetup{
    colorlinks=true,             
    linkcolor=[rgb]{0, 0.3, 0.6},          
    citecolor=[rgb]{0.5, 0.5, 0.5}, 
    urlcolor=black,               
    linktocpage=true
}
\usepackage[nameinlink]{cleveref}

\theoremstyle{plain}
\newtheorem{theorem}{Theorem}[section]
\newtheorem{lemma}[theorem]{Lemma}
\newtheorem{corollary}[theorem]{Corollary}
\newtheorem{claim}[theorem]{Claim}

\theoremstyle{definition}
\newtheorem{remark}{Remark}

\crefname{claim}{claim}{claims}
\crefname{line}{line}{lines}
\Crefname{procedure}{Procedure}{Procedures}

\newcommand{\rep}{\backslash}
\newcommand{\eps}{\varepsilon}
\newcommand{\OPT}{\mathsf{OPT}}
\newcommand{\TP}{\mathsf{TP}}
\renewcommand{\k}{^{(\mu)}}
\newcommand{\X}{X^\star}
\newcommand{\V}{\overline{V}}
\newcommand{\ddelta}{\dot{\delta}}

\newcommand{\cmin}{c^{\min}}

\newcommand{\+}{$_{\hspace{-0.15em}+}$}
\newcommand{\0}{$_{0}$}
\newcommand{\1}{$_{0\hspace{0.065em}}$}

\renewcommand{\leq}{\leqslant}
\renewcommand{\geq}{\geqslant}

\DeclareMathOperator\argmin{argmin}
\DeclareMathOperator\cost{cost}
\DeclarePairedDelimiter{\abs}{\lvert}{\rvert}
\DeclareMathOperator\MSF{MSF}

\title{Efficient Approximations for Many-Visits\\ Multiple Traveling Salesman Problems}

\author{Krist{\'o}f B{\'e}rczi\thanks{MTA-ELTE Momentum Matroid Optimization Research Group and MTA-ELTE Egerváry Research Group, Department of Operations Research, E{\"o}tv{\"o}s Lor{\'a}nd University, Budapest, Hungary. \texttt{kristof.berczi@ttk.elte.hu}.} 
  \and Matthias Mnich\thanks{TU Hamburg, Hamburg, Germany. \texttt{matthias.mnich@tuhh.de}.}
  \and Roland Vincze\thanks{Universit\"{a}t Augsburg, Augsburg, Germany. \texttt{roland.vincze@uni-a.de}.}
}

\begin{document}
\date{}
\maketitle

\begin{abstract}
A fundamental variant of the classical traveling salesman problem (TSP) is the so-called \emph{multiple TSP} (mTSP), where a set of $m$ salesmen jointly visit all cities from a set of $n$ cities.
The mTSP models many important real-life applications, in particular for vehicle routing problems.
An extensive survey by Bektas (\emph{Omega} \textbf{34}(3), 2006) lists a variety of heuristic and exact solution procedures for the mTSP, which quickly solve particular problem instances.

In this work we consider a further generalization of mTSP, the \emph{many-visits mTSP}, where each city $v$ has a request $r(v)$ of how many times it should be visited by the salesmen.
This problem opens up new real-life applications such as aircraft sequencing, while at the same time it poses several computational challenges.
We provide multiple efficient approximation algorithms for important variants of the many-visits mTSP, which are \emph{guaranteed} to quickly compute high-quality solutions for \emph{all} problem instances.
\end{abstract}

\section{Introduction}
\label{sec:introduction}
The traveling salesman problem (TSP) is one of the cornerstones of combinatorial optimization.
In the classical TSP, a set $V$ of $n$ cities is given, along with their pairwise distances $c(uv)$, and one seeks a tour of minimum cost that visits every city exactly once.
A well-known extension of the TSP is the \emph{multiple TSP} (mTSP), where there is not just one but a set of $m$ salesmen (or \emph{agents}), and one seeks a solution such that each of the $n$ cities is visited by at least one of the salesmen.
There are different quality measures that can be associated with such tours, such as the overall length of all the tours or the maximum length of any of the tours of the salesmen.
Various models exist, for example based on whether the salesmen are required to start their tours at a fixed set $D$ of depots or not.

An excellent overview of those variations is provided by Bektas~\cite{Bektas2006}, who in their survey also describes exact and heuristic solutions for the mTSP.
Here, exact solution refers to a solver (implemented algorithm) which is guaranteed to return an optimal solution for a particular set of instances, but its worst-case running time might be exponential in the input size.
In turn, a heuristic is guaranteed to return a solution quickly, but does not come with guarantees on the quality of the computed solution.
It is possible to design algorithms for the mTSP that exhibit both advantages at the same time, namely fast run times and solution quality guarantees. These are so-called \emph{$\alpha$-approximation algorithms}, which return a solution whose cost is at most a factor $\alpha$ worse than that of an optimal solution, while their run time is guaranteed to be efficient (polynomial) for all problem instances.
For mTSP, such approximation algorithms are based on a reduction to the standard TSP, and then resorting to approximation algorithms for the standard TSP.
The standard TSP, with arbitrary cost function~$c$, does not admit $\alpha$-approximation algorithms for \emph{any} constant factor $\alpha$, assuming that $\mathsf{P\not=NP}$.
A widely studied restriction is the TSP with \emph{metric} cost function, which means that city distances are non-negative and satisfy the triangle inequality, an assumption that is justified by many practical applications.
For metric TSP, the best-known approximation factor obtainable in polynomial time was $\alpha = \nicefrac32$ for a long time, independently discovered by Christofides~\cite{Christofides1976} and Serdyukov~\cite{Serdyukov1978} in the 1970s.
Very recently, Karlin et al.~\cite{KarlinKG2021} improved the factor to $\alpha = \nicefrac32 - \varepsilon_0$ for some small constant $\varepsilon_0 \approx 10^{-36}$.

In this work we consider a wide generalization of the multiple TSP, which we call the \emph{many-visits multiple TSP} (many-visits mTSP).
In this problem, each city $v\in V$ is equipped with an integer request $r(v)\geq 1$, which specifies the total number of times that it should be visited by the salesmen.
Thus, on the one hand, the mTSP is the special case when all requests~$r(v)$ are equal to one.
On the other hand, the many-visits mTSP is also an extension of the so-called \emph{many-visits TSP}.
The many-visits TSP was introduced in 1966 by Rothkopf~\cite{Rothkopf1966} as a generalization of the TSP, where in addition to the pairwise city distances, a request $r(v)$ is given for each city $v$.
The goal is then to find a minimum-cost closed walk that visits each city~$v$ exactly~$r(v)$ times.
In its path variant, called the \emph{many-visits path TSP}, there is a designated starting point $s\in V$ and a designated end point $t\in V$, where possibly $s\not=t$, and the salesman has to find a walk of minimum cost starting at $s$, ending at $t$, and visiting each city $v$ exactly~$r(v)$ times.
B{\'e}rczi et al.~\cite{BercziMV2020} recently gave a polynomial-time $\nicefrac32$--approximation for the many-visits path TSP and many-visits TSP with metric cost functions.
We remark that the requests $r(v)$ for the cities $v$ can be \emph{exponentially large} in the number~$n$ of cities; it is therefore not possible to solve instances of the many-visits TSP by reducing them to the standard TSP and solving them with one of the available solution procedures for the standard TSP, as such reductions would lead to an exponential blow-up.

The many-visits TSP and the many-visits path TSP find important real-life applications in modeling high-multiplicity scheduling problems.
Suppose there are $r$ jobs of $n$ different types to be executed on a single, universal machine.
Processing a job of type $j$ bears a cost $p(j)$, while switching from a job of type $i$ to type $j$ costs~$s(ij)$, and the goal is to find a sequence of jobs with minimum total cost.
Modeling this problem as an instance of many-visits TSP is straightforward, by defining the job types as cities, the number of jobs as visit requests for a city, and finally letting $c(ij)$ denote the sum of~$p(j)$ and $s(ij)$.
Note that $c(ii)$ is not necessarily smaller than~$c(ij)$ for $i \neq j$, meaning that in some applications it is beneficial to switch between job types.
Such a problem belongs to the class of high-multiplicity scheduling problems with setup times~\cite{AllahverdiNCK2008}.
Recently, Jansen et al.~\cite{JansenKMR2019} and Deppert and Jansen~\cite{DeppertJansen2019} considered scheduling~$n$ jobs divided into $c$ classes on $m$ identical parallel machines, where the jobs have sequence-independent batch setup times; the authors provided linear-time approximation algorithms for $\alpha = \nicefrac32$ and polynomial-time approximation algorithms for $\alpha = 1+\varepsilon$ for any $\varepsilon > 0$, for several variants of this problem.

An economically highly important task is that of sequencing landing aircraft at airports, where often a single runway or at best a few runways form a capacity bottleneck to the landing operations.
This application was (to the best of our knowledge) first described by Psaraftis~\cite{Psaraftis1980} in 1980 for the single-runway scenario, when all flights arrive at a single point of time and the objective is to land the last aircraft as early as possible; they devised algorithms with exponential run time dependence on the number of aircraft classes and polynomial run time dependence in the total number of aircraft.
Psaraftis' setting essentially corresponds to the many-visits path-TSP for a single agent ($m = 1$).

\pagebreak
Various other aircraft sequencing models, with more complex and more realistic settings, were developed over the past four decades; we refer to Briskorn and Stolletz~\cite{BriskornStolletz2014} for an overview.
With regards to this application, the many-visits mTSP for variable $m > 0$ corresponds to the aircraft sequencing problem with $m$ runways, where the cities correspond to aircraft types, intercity distances to minimum pairwise separation distances between aircraft types to avoid woke turbulence, and the requests $r(v)$ for city $v$ to the number of aircrafts of type $v$.
In particular, there is a non-zero distance $c(vv) > 0$ between landing two aircrafts of the same type.
Typical objectives that arise in practise are the aforementioned \emph{makespan} of landing the last aircraft as early as possible, or the \emph{sum of weighted completion times objective}, where each aircraft is assigned a weight (or priority) and one wishes to minimize the product of the weights with the landing times, summed over all aircrafts.

\subsection{Our contributions}
In this work we design efficient approximation algorithms for the many-visits multiple TSP.
As mentioned, even for the case of a single salesman ($m = 1$) the potentially exponentially large requests $r(v)$ rule out the existence of an simple reduction to the standard TSP and employing off-the-shelf methods for it.
Therefore, we design sophisticated algorithms that simultaneously handle those large requests as well as the partitioning task of deciding for each of the $m$ salesmen how many times they visit the city $v$, for each $v\in V$.
Consequently, our algorithms need to solve a complex partitioning task, as well as a sequencing task to find the right order of visits, and perform all of this in polynomial time.
Eventually, our algorithms are guaranteed to always return solutions of cost not much more than the optimum ($\alpha \leq 4$), for all problem instances; the precise approximation factor $\alpha$ depends on the variant of the problem that we consider.

However, giving the precise definition of the many-visits mTSP is not straightforward; one can consider different, but equally valid descriptions when introducing multiple salesmen to the many-visits TSP.
First, there are various choices for the objective function: one possibility is to minimize the \textit{total} cost of the tours, and another is to minimize the cost of the \textit{longest} tour.
These objectives translate to minimizing the total processing time ($\sum C_j$) and makespan ($C_{\max}$), respectively, using the terminology of scheduling theory.
Observe that in the case of the many-visits TSP, where there is only one agent, the two objectives are equivalent.
Second, one can also impose different constraints on the tours sought in the solution, such as requiring the tours of different agents to be disjoint or forbidding empty tours, i.e.\ agents not visiting any city.
Finally, one can consider a special city for each salesman called its \textit{depot}, and each tour in the solution must contain exactly one depot; alternatively there are no depots and each agent simply traverses a subset of the cities.

The four aspects mentioned above are independent from each other, therefore all $2^4=16$ possible generalizations can be considered.
Our focus in this work is on minimizing the total cost of the tours as an objective, and we start by examining these 8 problem variants and their relations to each other.
In particular, we show which ones are equivalent and which ones are different in terms of their optimal solution values.
Thereafter, we develop our main contributions, which are efficient approximation algorithms with small approximation factors $\alpha$ for all problem variants.

\subsection{Formal problem description}
The many-visits multiple TSP (many-visits mTSP or MV-mTSP) is defined as follows.
Given is a complete graph $G(V,E)$ with non-negative costs $c(uv)$ for every pair of vertices $u, v$ and a positive request~$r(v)$ for each vertex $v$ (encoded in binary).
Let us denote by $\cmin(v) := \min_{u \in V-v} c(uv)$ the cheapest edge adjacent to $v$.
Besides the inequality $c(uw) \leq c(uv) + c(vw)$ for every triplet $u, v, w$, the cost function being metric means that the cost of the self-loop $c(vv)$ at each vertex $v\in V$ is at most the cost of leaving city $v$ to any other city~$u$ and returning\footnote{
Note that, unlike in most results involving metric costs, we do not assume that the cost of a self-loop is~$0$.}, that is:
\begin{equation*}
  c(vv) \leq 2 \cdot \cmin(v) \qquad \text{for all } v \in V \enspace .
\end{equation*}
Moreover, an integer $m$ is given, that denotes the (maximum) number of agents. 

Let us now introduce the different variants of the problem.
Based on whether the agents can start from arbitrary cities or only from their dedicated depots, we distinguish the cases when
\begin{itemize}
  \item[a)]    \textit{there are no depots / there is a set of depots $D = \{d_1, \dots, d_k\}$ such that $D \subseteq V$;} \label{dist:a}
  \end{itemize}
in case we consider no depots, we call the problem \textit{unrestricted} MV-mTSP.

Depending on whether an agent can be left without work or not, the goal is to find a multigraph $X$ such that
\begin{itemize}
  \item[b)]    \textit{$X$ can be decomposed into at most / exactly $m$ non-empty many-visits TSP tours;} \label{dist:b}
\end{itemize}
the corresponding problems are denoted with a $0$ or $+$ subscript, respectively. Here \textit{non-empty} means containing at least one edge for the unrestricted problem, or a depot and at least one non-depot vertex for the problem with depots. 

Finally, the MVTSP tours we are seeking for can be
\begin{itemize}
  \item[c)]    \textit{vertex-disjoint / not necessarily vertex-disjoint.} \label{dist:c}
\end{itemize}
In the former case, we say that the problem is \textit{with disjoint tours}, while in the latter \textit{with arbitrary tours} (referring to the possibility but not a requirement of overlapping tours).
The setting of arbitrary tours allows that two or more agents \emph{jointly} satisfy the visit request $r(v)$ of a city, in the sense that the visits of the agents sum up to $r(v)$.

Let us denote the many-visits TSP tour of agent $i$ by $X_i$, for $i=1, \dots, m$.
\Cref{tab:Px} contains an overview of the problems considered. Note that distinctions b) and c) do not appear in the single-visit mTSP and hence are introduced in this paper.

\begin{table}[!ht] \centering
\begin{tabular}{lrcc}
  \toprule
 & & \textbf{arbitrary tours} & \textbf{disjoint tours} \\ \midrule
$\sum_i \cost(X_i)$ 
 & unrestricted MV-mTSP\+ & P1 & P2 \\ 
 & unrestricted MV-mTSP\1 & P3 & P4 \\
 & MV-mTSP\+ & P5 & P6 \\ 
 & MV-mTSP\1 & P7 & P8 \\
 \bottomrule
\end{tabular}
\caption{Overview of the problems considered in this paper.}
\label{tab:Px}
\end{table}

\paragraph{Variants without depots.}
In the unrestricted MV-mTSP variants, the goal is to obtain a multigraph consisting of $m$ closed walks (MVTSP tours), that has a minimum total cost and visits each vertex $v$ exactly $r(v)$ times.
When it comes to precise definitions of this problem, one can set the rules for the $m$ tours in different ways.
Some of these differences might seem subtle, but they are crucial in algorithm design.
Recall that in the unrestricted setting, an MVTSP tour is non-empty if it contains at least one edge.
We denote the different possible problems the following way:

\pagebreak
\begin{itemize}
  \item[(P1)]  unrestricted MV-mTSP\+ with arbitrary tours: the edge set of the resulting multigraph $X$ can be partitioned into $m$ non-empty, not necessarily vertex-disjoint MVTSP tours;
  \item[(P2)]  unrestricted MV-mTSP\+ with disjoint tours: the resulting multigraph $X$ has exactly~$m$ components, each of them being a non-empty MVTSP tour;
  \item[(P3)]  unrestricted MV-mTSP\0 with arbitrary tours: the edge set of the resulting multigraph~$X$ can be partitioned into at most $m$ not necessarily vertex-disjoint MVTSP tours;
  \item[(P4)]  unrestricted MV-mTSP\0 with disjoint tours: the edge set of the resulting multigraph~$X$ has at most~$m$ components, each of them being an MVTSP tour.
\end{itemize}
  
It is worth emphasizing that the cost of a self-loop at a vertex, i.e.\ going from vertex $v$ to itself, does not necessarily incur a zero cost.
Moreover, if an agent is assigned a single vertex~$v$ in a solution, an isolated point does not count as one visit, i.e.\ the tour of that agent has to consist of a self-loop at $v$ for each visit they ought to make at~$v$.

\paragraph{Variants with depots.}
In the MV-mTSP variants (which contain depots), a set $D \subseteq V$ of special vertices called depots is given, and the goal is to find a minimum cost multigraph consisting of $m$ closed walks such that each one contains exactly one depot $d$ from $D$.
Recall that in the restricted setting, an MVTSP tour is non-empty if it contains a depot and at least one non-depot.
The problem definitions follow a similar pattern as in the unrestricted case, but include the notion of depots:
\begin{itemize}
  \item[(P5)]  MV-mTSP\+ with arbitrary tours: the edge set of the resulting multigraph $X$ can be partitioned into $m$ non-empty, not necessarily vertex-disjoint MVTSP tours;
  \item[(P6)]  MV-mTSP\+ with disjoint tours: the edge set of the resulting multigraph $X$ has exactly~$m$ components, each of them being a non-empty MVTSP tour;
  \item[(P7)]  MV-mTSP\0 with arbitrary tours: the edge set of the resulting multigraph $X$ can be decomposed into at most $m$ not necessarily vertex-disjoint MVTSP tours;
  \item[(P8)]  MV-mTSP\0 with disjoint tours: the edge set of the resulting multigraph $X$ has at most~$m$ components, each of them being a MVTSP tour.
\end{itemize}

Let us note that the depots have no $r(\cdot)$ values assigned.
However, the problem definitions inherently provide a lower bound for the number of visits.
  In the case of non-empty tours, the number of visits has to be at least one, while in the case of empty tours this lower bound is zero.
  There is no implicit upper bound for the number of visits, however, there is always an optimal solution where every depot is visited at most once, as in case of more than one visits it is possible to do shortcuts and obtain a tour with smaller or equal cost; see \Cref{lem:depotonce}.
\medskip

\paragraph{Implications in scheduling theory.}
Let us rephrase the problems using scheduling theory terminology.  
In the unrestricted case, the disjoint tours variant can be translated as finding a schedule of jobs such that each job of type $j$ is assigned to the same machine.
The arbitrary tours variant corresponds to finding a schedule of jobs, where jobs of the same type can be processed on multiple machines. 
Depending on whether empty tours are allowed or not, machines are allowed to be idle (i.e. with no jobs scheduled on it) or not, respectively.

The different variants with depots correspond to similar scheduling problems, where additionally a special job (that corresponds to the depot) is present on each machine.
One can think of this job as a preparation step that needs to be performed if one intends to process any jobs on that machine.

\subsection{Related work}
The generalization of the (single-visit) TSP to multiple salesmen or agents is commonly referred to as multiple TSP or mTSP.
The first constant-factor approximation algorithm for the mTSP is due to Frieze~\cite{Frieze1983}, who considered $m$ salesmen starting from a single vertex called \emph{depot}. 
They extended the Christofides-Serdyukov algorithm for the TSP~\cite{Christofides1976,Serdyukov1978}, and provided a $\nicefrac32$\hyp{}approximation.

A more diverse set of works considered a modification of the mTSP, where the salesmen start from different depots.
The multidepot mTSP (referred to as MDMTSP in the literature) seeks $m$ cycles that together cover a set of cities, and have exactly one depot in each cycle.
Rathinam et al.~\cite{RathinamSD2007} provided a $2$-approximation using a tree-doubling approach, then Xu et al.~\cite{XuXR2011} showed that a Christofides-like heuristic yields a $(2-\nicefrac1m)$-approximation.
Xu et al. first construct a minimum cost constrained spanning forest, where each tree contains exactly one depot, then complete it with a minimum cost matching on the odd degree vertices.
They also argue that the cost of the matching cannot be bounded by $\nicefrac{\OPT}{2}$, as in the case of TSP.
The reason is that the edges of the matching might go between vertices that are in different components of the optimal solution.
The $\nicefrac32$-approximation algorithm by Xu and Rodrigues~\cite{XuRodrigues2015} circumvents this problem by exchanging the edges of the constrained spanning forest, although this edge-exchanging procedure leads to a time complexity of $n^{\Omega(m)}$.
It is worth noting that the results so far allow isolated depots in the solution, i.e.\ a salesman that visits no cities.

In the generalized multidepot mTSP, more than $m$ depots are available, but only at most~$m$ can be selected.
The first constant-factor polynomial\hyp{}time algorithms are $2$-approximations by Malik et al.~\cite{MalikRD2007} and by Carnes and Schmoys~\cite{CarnesShmoys2011}.
Later, Xu and Rodrigues~\cite{XuRodrigues2017} provided a $(2-\nicefrac{1}{(2m)})$-approximation.
 The algorithm of Xu and Rodrigues~\cite{XuRodrigues2015} can be used to obtain a $\nicefrac32$-approximation for the generalized multidepot mTSP, albeit requiring~$n^{\Omega(m)}$ time.

Khachay and Neznakhina~\cite{KhachayNeznakhina2016} considered the relevant \emph{$m$-size cycle cover} problem, that seeks~$m$ cycles covering the set of cities with minimum total cost.
The authors show a $2$-approximation to this problem when the edge costs are asymmetric, satisfy the triangle inequality, and the cost of each loop is zero.

\subsection{Detailed results}
The main results of this paper include polynomial-time constant-factor approximation algorithms for all variants of the many-visits mTSP.

We first provide $3$- and $4$-approximations for the arbitrary tours variants (\Cref{alg:mamvtsp_3apx,alg:mdmvtsp_3apx}).
In the unrestricted setting, the algorithm computes a minimum cost constrained spanning forest, double its edges, and then adds self-loops to fix the degrees.
In the restricted setting, an optimal transportation problem solution is calculated instead of self-loops.
By bounding the cost of both of these structures, the approximation ratios follow.

We continue by showing $4$-approximations for the disjoint tours variants (\Cref{alg:mamvtsp_4apx,alg:mdmvtsp_4apx}). 
The proofs are similar to that of \Cref{alg:mamvtsp_3apx}; however, due to the slight differences caused by the number of agents we discuss those separately. 
  
Finally, we give improved $2$-approximations for the empty tours variants (\Cref{alg:mamvtsp0_2apx,alg:mdmvtsp0_2apx}).
These approaches build on the techniques from B{\'e}rczi et al.~\cite{BercziMV2020}, that were used to obtain a polynomial-time $\nicefrac32$\hyp{}approximation for the metric Many-Visits Path TSP.
The algorithms calculate degree-bounded multigraphs, then apply an edge-doubling and an efficient shortcutting procedure.
We provide an overview of these results in \Cref{tab:overview}. 
  
\begin{table}[!ht] \centering
\begin{tabular}{SrScSc}
 \toprule
 & \textbf{arbitrary tours} & \textbf{disjoint tours} \\ \midrule
 unrestricted MV-mTSP\+ & $4$-approx. (\Cref{alg:mamvtsp_3apx}) & $4$-approx. (\Cref{alg:mamvtsp_4apx}) \\ \cline{1-3} 
 unrestricted MV-mTSP\1 & \multicolumn{2}{Sc}{ 
 $2$-approximation (\Cref{alg:mamvtsp0_2apx})} \\ \cline{1-3}
 MV-mTSP\+ & $3$-approx. (\Cref{alg:mdmvtsp_3apx}) & $4$-approx. (\Cref{alg:mdmvtsp_4apx}) \\ \cline{1-3} 
 MV-mTSP\1 & \multicolumn{2}{Sc}{
 $2$-approximation (\Cref{alg:mdmvtsp0_2apx})} \\ 
 \bottomrule
\end{tabular} 
\caption{Overview of approximation algorithms for the different variants of many-visits mTSP.}
\label{tab:overview}
\end{table}    

  The unrestricted MV-mTSP with empty tours generalizes the $m$-cycle cover problem for undirected graphs~\cite{KhachayNeznakhina2016}, hence improving the approximation ratio of \Cref{alg:mamvtsp0_2apx} for symmetric edge costs would imply an improved algorithm for the $m$-cycle cover problem.
  The (single-visit) multidepot mTSP defined in Rathinam et al.~\cite{RathinamSD2007} allows some agents to not visit any cities, i.e.\ having a component in the solution graph consisting of a depot and no other vertices or edges.  
  Thus, the MV-mTSP with empty tours generalizes the multidepot mTSP, and the approximation ratio of \Cref{alg:mdmvtsp0_2apx} matches that of Rathinam et al.~\cite{RathinamSD2007} and is only slightly worse than that of Xu et al.~\cite{XuXR2011}.

Our results do not compare directly to the results by Jansen et al.~\cite{JansenKMR2019} and Deppert and Jansen~\cite{DeppertJansen2019}.
In one aspect, the problem variants considered by the authors can be regarded as easier than the problems considered in this paper, as they assume sequence-independent setup times that only occur before processing a batch of jobs from the same class.
On the other hand, the jobs from the same class considered in those works~\cite{JansenKMR2019, DeppertJansen2019} can have \textit{different} processing times, which is a much more general assumption than the cost structure of jobs considered in this work (the edge costs translate to zero processing times and the sum of processing and setup times being metric).

\section{Preliminaries}

\subsection{Notions and Notations}
A \emph{multigraph} $X$ is a graph on the vertex set $V$ with a multiset $E(X)$ as edge set, that is, $E(X)$ might contain several copies of the same edge.
For a subset $F\subseteq E$ of edges, the \emph{set of vertices covered by $F$} is denoted by $V(F)$.
Given a multigraph $X$, the number of edges leaving the vertex set $C \subseteq V(X)$ is denoted by $\delta_X(C)$.
Similarly, the number of regular edges (i.e. excluding self-loops) in~$X$ incident to a vertex $v\in V$ is denoted by~$\delta_X(v)$.
The number of all edges (i.e. including self-loops) in $X$ incident to a vertex $v \in V$ is denoted by~$\ddelta_X(v)$, where the multiplicity of the self-loop on $v$ is counted twice. 
Given edge costs $c: V \times V \rightarrow \mathbbm{R}_{\geq 0} \cup \{+\infty\}$, the \emph{cost} of a multigraph $X$ is simply the sum of its edge costs, i.e.\ $\cost(X) = \sum_{u,v \in V} x(uv) \cdot c(uv)$, where $x(uv)$ is the number of copies of an edge $uv$ in $X$.

In the many-visits mTSP variants, the input graph $G(V,E)$ contains a set of depots $D \subseteq V$.
In the problem instances throughout this paper, we denote by $\V := V \rep D$ the set of non-depot vertices (or cities).
The request function $r(\cdot)$ is defined only on the set $\V$.

A feasible many-visits mTSP solution is a collection of connected multigraphs.
Observe that each such multigraph $X$ can be regarded as an MVTSP tour, as $X$ is connected and every vertex has an even degree in $X$.
Denote the total number of visit requirements by $r(V) = \sum_{v \in \V} r(v)$.
In order to avoid a linear dependence on $r(V)$ in the time and space complexities, we use a \emph{compact representation} of the multigraph $X$.
We store $X$ using $O(n^2 \log r(V))$ space, by storing each edge $uv$ of $X$ along with its multiplicity $x(uv)$.
Given such a representation of~$X$, an MVSTP tour $T$ with edge set $X$ can easily be recovered, such that $\cost(T) = \cost(X)$, using Hierholzer's algorithm~\cite{Hierholzer1873,Fleischner1991} in $O(r(V)^3)$ time.
Note however that $r(V)$ might be exponential in the input size, hence we will work with the compact representation throughout the paper.

\subsection{Reductions among problem variants}

When it comes to the problem variants with depots, we allow any solution to problems (P5)--(P8) to visit a depot an arbitrary number of times (but requiring a positive number of visits in (P5) and (P6)), although it is not difficult to see that one visit is always sufficient due to the cost function being metric.

\begin{lemma}
\label{lem:depotonce}
  Let a multigraph $X$ be an optimal solution to any of the problems (P5)--(P8).
  Then there exists an optimal solution $X'$ with $\cost(X') \leq \cost(X)$, such that $X'$ visits each depot $d \in D$ at most once.
\end{lemma}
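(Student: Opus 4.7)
The plan is to apply a direct shortcutting argument to the closed walks that make up $X$. I would first decompose $X$ into its MVTSP tours $X_1, \ldots, X_{m'}$ (with $m' \leq m$); each $X_i$ is a closed walk containing one designated depot $d_i \in D$, and I will modify each $X_i$ independently so that in the resulting multigraph $X'$ every depot is visited at most once. Since the modification will only decrease or preserve cost, $X'$ will automatically remain optimal.

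Fix a tour $X_i$ and write it as a closed walk $W = v_0 v_1 \cdots v_k = v_0$ with $v_0 = d_i$. Whenever some internal vertex $v_j$, $1 \leq j \leq k-1$, is a depot (either $d_i$ visited again, or another depot traversed en route), I would replace the two consecutive edges $v_{j-1} v_j$ and $v_j v_{j+1}$ by the single edge $v_{j-1} v_{j+1}$, interpreted as a self-loop at $v_{j-1}$ when $v_{j-1} = v_{j+1}$. This shortcut produces a shorter closed walk that visits $v_j$ one fewer time while leaving the visit counts of every other vertex (and hence the requests $r(v)$ for $v \in \V$) unchanged. Iterating until no depot appears in an internal position gives the desired $X_i'$; doing this for each $i$ yields $X'$.

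It remains to show that each shortcut does not increase the cost. In the generic case $v_{j-1} \neq v_{j+1}$, the triangle inequality gives $c(v_{j-1} v_{j+1}) \leq c(v_{j-1} v_j) + c(v_j v_{j+1})$ directly. The one subtle case is the self-loop situation $v_{j-1} = v_{j+1} = u$: here I would invoke the extended metric assumption built into the formulation, namely $c(uu) \leq 2 \cmin(u) \leq 2\, c(u v_j)$, which yields the same bound. This is precisely why the paper's metric condition explicitly includes the self-loop inequality, and it is the only step where the plain triangle inequality is insufficient. Finally, feasibility of $X'$ for each of (P5)--(P8) (non-emptiness, the required number of components, and disjointness of tours) is preserved because the shortcut touches only depot visits and keeps each closed walk connected.
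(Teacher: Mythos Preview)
Your proof is correct and follows essentially the same shortcutting approach as the paper: both decompose $X$ into its tours and remove excess depot occurrences from each tour via metric shortcuts, which is exactly what the paper does when it says ``the tour of $j$ can be shortcut at $d$'' and ``all of these edges except the first and the last ones can be eliminated by applying shortcuts.'' Your write-up is in fact slightly more careful than the paper's in explicitly handling the degenerate case $v_{j-1}=v_{j+1}$ via the self-loop inequality $c(uu)\leq 2\cmin(u)$, which the paper's proof leaves implicit.
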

\begin{proof}
  If $X$ visits each depot at most once, we are done.
  Clearly, we may assume that $X$ contains no self-loops on the depots.
  Let us assume that $X$ visits a depot $d \in D$ multiple times, i.e. $\delta_X(d) > 2$, and let $i$ be the agent corresponding to depot $d$. 
  If another agent $j$ visits $d$ as well, then the tour of $j$ can be shortcut at $d$, and such a step does not increase the total cost of the solution because of the cost function being metric.
  Analogously, if the tour of agent $i$ contains more than $2$ edges incident to $d$, then all of these edges except the first and the last ones can be eliminated by applying shortcuts.
  This concludes the proof of the lemma.
\end{proof}

One might wonder whether it is indeed needed to address both the unrestricted variants and the variants with depots, or maybe there exists a reduction between the two in either direction.
Given a MV-mTSP instance on $n$ vertices and $m$ depots, a straightforward way to formulate an unrestricted counterpart is to define a graph on $n+m$ vertices, where the original vertices inherit the visit requests, and the vertices corresponding to depots are given a request of $1$ (according to \Cref{lem:depotonce}, this is sufficient).
However, the depots are needed to end up in different tours, and there is no tool in the Multiple-Agent setting that would enforce such a constraint.
In the other direction, given an unrestricted MV-mTSP instance on $n$ nodes, one could construct a MV-mTSP counterpart by simply keeping the $n$ vertices and adding $m$ depots.
Now the difficulty comes from the fact that the edge costs between the original vertices and the depots should be defined in such a way that an optimal solution to the new MV-mTSP instance corresponds to an optimal solution to the original unrestricted MV-mTSP instance, and the resulting cost function satisfy the triangle inequality.
Such a cost function can be found by guessing one edge in each of the~$m$ tours, but that leads to an algorithm with a running time exponential in $m$. 
As we assume that~$m$ is part of the input, we skip the details here and give direct algorithms instead. 

First, let us show a reduction between the problem variants we consider.
\begin{lemma}
\label{lem:sumc_reduction}
  There are polynomial-time reduction to problems P1, P2, P5 and P6 from problems P3, P4, P7 and P8, respectively.
\end{lemma}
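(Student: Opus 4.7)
The plan is to reduce each ``at most $m$ tours'' problem to its ``exactly $m$ non-empty tours'' counterpart by \emph{enumerating over the number of agents}. Given an instance $I$ of P3 with parameter $m$, I would construct $m$ instances $I_1,\ldots,I_m$ of P1, where $I_k$ is identical to $I$ except that the number of agents is fixed to $k$. The reduction invokes the P1 oracle on each $I_k$ and returns the cheapest of the $m$ outputs.

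Correctness rests on the identity
\begin{equation*}
  \OPT_{P3}(I)\;=\;\min_{k=1}^{m}\OPT_{P1}(I_k)\,.
\end{equation*}
For ``$\geq$'', any optimal P3 solution decomposes into some $k^\star\leq m$ non-empty tours, so it is feasible for the P1 instance $I_{k^\star}$. For ``$\leq$'', any P1-feasible solution for $I_k$ with $k\leq m$ is by definition also P3-feasible for $I$, since P3 permits up to $m$ tours. Consequently, applying an $\alpha$-approximation for P1 to each $I_k$ and returning the best output yields an $\alpha$-approximation for P3, because the minimum is bounded by $\alpha\cdot\OPT_{P1}(I_{k^\star})=\alpha\cdot\OPT_{P3}(I)$.

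The same scheme reduces P4 to P2, P7 to P5, and P8 to P6, with disjointness (where applicable) and the use of depots inherited in each step. The one subtle point occurs in the depot variants: here I would invoke \Cref{lem:depotonce} to ensure that an optimal P7- or P8-solution visits each depot at most once, so its $k^\star$ non-empty tours use exactly $k^\star$ distinct depots, matching what a P5- or P6-solution with parameter $k^\star$ naturally produces. I do not foresee a real obstacle; the main care needed is to verify, for each of the four pairs, that the notion of ``non-empty tour'' (an edge in the unrestricted case; a depot plus a non-depot in the depot case) transfers correctly across the reduction, and that the $m$ oracle calls keep the total running time polynomial in the input size.
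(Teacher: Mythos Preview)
Your proposal is correct and follows essentially the same approach as the paper: enumerate over the number $k=1,\dots,m$ of active agents, solve the corresponding ``$+$'' variant for each $k$, and return the cheapest outcome. You are in fact slightly more explicit than the paper in spelling out both directions of the optimum identity, the approximation-ratio preservation, and the depot subtlety via \Cref{lem:depotonce}, but the core argument is identical.
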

\begin{proof}
  From the problem definitions, an optimal solution $X^\star$ to P3, P4, P7 or P8 consists of~$m$, possibly empty MVTSP tours, denoted by $X^*_1, \dots, X^*_m$.
  Let us reorder the agents such that $X^*_1, \dots, X^*_\ell$ denote the empty tours and $X^*_{\ell+1}, \dots, X^*_m$ the non-empty tours.
  It is easy to see that $X^\star$ is not only an optimal solution to P3, P4, P7 or P8, but their counterparts P1, P2, P5 or P6, with $\ell$ agents.
  This means that solving problem instances of P1, P2, P5 or P6 with $\ell$ agents for $\ell := 1, \dots, m$, the cheapest of the $m$ solutions will be an optimal solution to P3, P4, P7 or~P8.
\end{proof}

The next claim is an easy observation that follows from the problem definitions.
\begin{claim}
\label{clm:sumc_feasible}
  A feasible solution for P1, P2, P5 and P6 is a feasible solution for P3, P4, P7 and~P8, respectively.
  A feasible solution for P2, P4, P6 and P8 is a feasible solution for P1, P3, P5 and~P7, respectively.
\end{claim}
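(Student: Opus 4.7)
The plan is to argue directly from the problem definitions; the claim is essentially that each of the two binary distinctions $(+ \text{ vs.\ } 0)$ and $(\text{disjoint vs.\ arbitrary})$ imposes a strictly stronger constraint on one side than on the other, so feasibility is inherited in the obvious direction.

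First, I would handle the reduction from $+$-variants to $0$-variants (i.e.\ P1$\to$P3, P2$\to$P4, P5$\to$P7, P6$\to$P8). Given a feasible $X$ for the $+$-variant, by definition the edge set of $X$ decomposes into \emph{exactly} $m$ non-empty MVTSP tours (or, in the disjoint case, $X$ has exactly $m$ non-empty components each forming an MVTSP tour). Since ``exactly $m$ non-empty'' trivially implies ``at most $m$'' (with no requirement that the tours be non-empty), $X$ is feasible for the corresponding $0$-variant. The vertex visit requirements $r(v)$ are identical in both variants, so no additional check is required.

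Second, I would handle the reduction from disjoint-tour variants to arbitrary-tour variants (i.e.\ P2$\to$P1, P4$\to$P3, P6$\to$P5, P8$\to$P7). Given a feasible $X$ for the disjoint-tour variant, $X$ decomposes into (exactly or at most) $m$ vertex-disjoint MVTSP tours, each of which is a connected component of $X$. In particular, the edge set of $X$ partitions into that same collection of MVTSP tours; hence $X$ trivially satisfies the weaker requirement of the arbitrary-tour variant, where the tours need only partition the edge set and are not required to be vertex-disjoint. The constraints on emptiness (whether in the $+$ or $0$ sense) and on visit requirements are identical between the paired variants, so feasibility is preserved.

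There is no real obstacle here: the only subtlety is to check that the notion of ``non-empty'' used in the two paired variants coincides (containing at least one edge in the unrestricted case, or a depot plus at least one non-depot in the variants with depots), which is immediate from the definitions given in Section~1.2. Combining the two inheritance directions yields the claim in full.
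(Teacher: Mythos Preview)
Your proposal is correct and follows essentially the same approach as the paper's own proof: both argue directly from the definitions that the $0$-variants relax the $+$-variants by permitting empty tours, and that the arbitrary-tour variants relax the disjoint-tour variants by dropping the vertex-disjointness constraint. Your version is simply more explicit about the details than the paper's two-sentence proof.
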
  
\begin{proof}
  Problems P3, P4, P7 and P8 only differ from their respective counterparts of P1, P2, P5 and P6 in one detail, namely the former problems allow for empty tours in the solution.
  Moreover, problems P1, P3, P5 and P7 are relaxations of problems P2, P4, P6 and P8, respectively, with the disjointness constraint lifted.
\end{proof}

In what follows, we discuss the similarities and the differences between the problems.
The cost of an optimal solution to problem P$i$ is denoted by $\OPT_{\text{P}i}$ for $i=1,\dots,8$. 
Below, we use the symbol \enquote*{$\equiv$} to indicate that given the same parameters $G$, $c$, $r$ and $m$, the optimal solution values of the two problems in question coincide.
Intuitively, this means that restricting the tours to be disjoint or forbidding empty tours in the solution might have no effect on the optimal solution value.
Similarly, we use the symbol \enquote*{$\not\equiv$} if there are instances where the costs of optimal solutions of the two problems are different.

In the next two claims we prove that, in case we allow the agents to have empty tours, it does not matter whether we require vertex-disjoint tours or allow the tours to overlap.
\begin{claim}
\label{clm:three_four}
  P3 $\equiv$ P4.
\end{claim}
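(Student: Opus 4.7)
The plan is to establish the two inequalities $\OPT_{\text{P3}} \leq \OPT_{\text{P4}}$ and $\OPT_{\text{P4}} \leq \OPT_{\text{P3}}$ separately. The first direction is immediate from \Cref{clm:sumc_feasible}: any feasible solution for P4 (vertex-disjoint tours) is feasible for P3 (arbitrary tours), so the optimum of the more constrained problem is at least that of the relaxed problem.

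For the reverse inequality, I would start with an optimal P3 solution $\X$ and its decomposition into at most $m$ (not necessarily vertex-disjoint) MVTSP tours $X^\star_1, \dots, X^\star_k$, $k\leq m$. The key operation is a \emph{tour merge}: if two tours $X^\star_i$ and $X^\star_j$ share a common vertex $v$, replace them by a single closed walk that traverses $X^\star_i$ starting and ending at $v$, and then traverses $X^\star_j$ starting and ending at $v$. The resulting walk uses exactly the same multiset of edges as $X^\star_i \cup X^\star_j$, so the total edge multiset (and therefore the total cost, as well as the visit counts at every vertex) is preserved. Repeating this operation strictly decreases the number of tours at each step, so after at most $k-1$ merges we obtain a collection of pairwise vertex-disjoint closed walks whose union equals $\X$. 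Because empty tours are permitted in P4, the resulting collection is a feasible P4 solution of at most $m$ components and cost equal to $\OPT_{\text{P3}}$, proving $\OPT_{\text{P4}} \leq \OPT_{\text{P3}}$.

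The main thing to check carefully is that the merge operation really produces an MVTSP tour — i.e.\ a single closed walk — and that iterating it terminates with vertex-disjoint components. For the former, one can either argue directly by concatenating the two closed walks at the shared vertex $v$, or observe that the union of the edge multisets $X^\star_i \cup X^\star_j$ forms a connected multigraph in which every vertex has even degree, and therefore admits a closed Eulerian walk by standard arguments. For the latter, note that every merge reduces the number of connected components of the multigraph (when viewed as a partition) by one while preserving the union, so the process halts at a partition into vertex-disjoint components. There is no issue with the bound $m$ since merges only decrease the number of tours.
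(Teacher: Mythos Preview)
Your proposal is correct and follows essentially the same approach as the paper: one direction comes from \Cref{clm:sumc_feasible}, and for the other you repeatedly merge two overlapping tours into a single Eulerian closed walk on their union (leaving an empty tour behind), preserving the edge multiset and hence the cost. If anything, you are slightly more careful than the paper in spelling out the iteration and termination of the merge process.
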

\begin{proof}
  We will prove that the optimal solutions to P3 and P4 have the same optimum cost for a given instance $(G, c, r, k)$.
  
	Let $\X$ be an optimal solution to P3 with cost $\OPT_{\text{P3}}$, and suppose that the tours of agents~$i$ and~$j$ overlap, that is, $V(X^\star_i) \cap V(X^\star_j) \neq \emptyset$.
  As $X^\star_i$ and $X^\star_j$ are MVTSP tours, every vertex $v \in V(X^\star_i)$ has an even degree in $X^\star_i$, and every vertex $v \in V(X^\star_j)$ has an even degree in $X^\star_j$.
  This implies that every $v \in V(X^\star_i) \cup V(X^\star_j)$ has an even degree in $X^\star_i \cup X^\star_j$, and moreover, $X^\star_i \cup X^\star_j$ is connected.
  One can now redefine the tour of agent~$i$ as $X_i := X^\star_i \cup X^\star_j$ and the tour of agent $j$ as the empty graph ($X_j := \emptyset$), making the tours of agents $i$ and $j$ disjoint.
  The resulting multigraph $X$ is a solution to P4, since it has at most $m$ components and visit each vertex $v \in V$ exactly $r(v)$ times. 
  Therefore $\cost(X) \geq \OPT_{\text{P4}}$.
  However, the cost of $X$ is $\OPT_{\text{P3}}$, since they have the same edge multiset. 
  This shows that $\OPT_{\text{P3}} \geq \OPT_{\text{P4}}$ holds.
  
  Now assume that $Y^\star$ is an optimal solution to P4, with cost $\OPT_{\text{P4}}$.
  According to \Cref{clm:sumc_feasible}, the multigraph $Y^\star$ is a feasible solution to P3, thus $\OPT_{\text{P3}} \leq \OPT_{\text{P4}}$ holds, concluding the proof.
\end{proof}

A similar reasoning works for the analogous problems with depots:
\begin{claim}
  \label{clm:seven_eight}
  P7 $\equiv$ P8.
\end{claim}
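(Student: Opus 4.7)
The plan is to follow the two-direction template of \Cref{clm:three_four} (P3 $\equiv$ P4), but replace its merging trick with a shortcutting step that exploits the metric assumption. One direction is free: by \Cref{clm:sumc_feasible}, every P8-feasible multigraph is also P7-feasible, which immediately gives $\OPT_{\text{P7}} \leq \OPT_{\text{P8}}$.

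For the reverse inequality, I would start from an optimal P7 solution $X^\star$ and, by \Cref{lem:depotonce}, assume every depot has degree at most $2$ in $X^\star$. The crucial difference from \Cref{clm:three_four} is that the naive merging trick does \emph{not} transfer: the union of two overlapping tours is a closed walk containing \emph{two} depots, violating the one-depot-per-tour requirement of an MVTSP tour. Instead, whenever a connected component of $X^\star$ contains two depots $d_i, d_j$ that both belong to non-empty tours, I would shortcut $d_j$ out, as follows. Let $u, w$ be the two (necessarily non-depot) $X^\star$-neighbors of $d_j$; delete the edges $u d_j$ and $d_j w$ and insert the edge $uw$, or the self-loop $uu$ when $u = w$. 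When $u \neq w$ the triangle inequality gives $c(uw) \leq c(u d_j) + c(d_j w)$, while when $u = w$ the metric self-loop bound $c(uu) \leq 2\cmin(u) \leq c(u d_j) + c(d_j u)$ is exactly what is needed, so in either case the cost does not grow. Every non-depot degree is preserved (so the requests $r(v)$ remain satisfied), the component stays connected because $d_j$ had degree only~$2$, and $d_j$ becomes isolated, so the agent assigned to it now gets the (permitted) empty tour.

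Iterating this shortcut across all components carrying more than one depot yields a multigraph $X'$ with $\cost(X') \leq \cost(X^\star) = \OPT_{\text{P7}}$ in which each remaining edge-component carries exactly one depot, hence is a genuine MVTSP tour, and there are at most $m$ such components. Thus $X'$ is P8-feasible and $\OPT_{\text{P8}} \leq \OPT_{\text{P7}}$. The main point to handle carefully is precisely the $u = w$ corner case, where the replacement edge is a self-loop and only the self-loop bound from the metric assumption—not the plain triangle inequality—makes the cost estimate go through.
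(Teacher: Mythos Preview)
Your proof is correct. Both you and the paper establish the nontrivial direction by shortcutting superfluous depots out of each connected component, but the executions differ in one respect. The paper does \emph{not} first reduce depot degrees via \Cref{lem:depotonce}; instead it decomposes the component into cycles (invoking the machinery behind \Cref{lem:shortcuts}) and, for every cycle through $d_j$, replaces the two incident edges by a single chord, explicitly arguing that this can be carried out in polynomial time despite the potentially exponential edge multiplicities. Your preliminary appeal to \Cref{lem:depotonce} forces each depot to degree at most~$2$, so every superfluous depot is removed by a single edge swap; this is more elementary and sidesteps the efficiency discussion entirely, which is fine since the claim only asserts $\OPT_{\text{P7}}=\OPT_{\text{P8}}$. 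What the paper's route buys in exchange is an explicit polynomial-time transformation from arbitrary P7 solutions to P8 solutions, something your argument does not directly yield because \Cref{lem:depotonce} is stated purely existentially.
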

\begin{proof}
  Let $\X$ be an optimal solution to P7 with cost $\OPT_{\text{P7}}$, and suppose that the tours of agents~$i$ and~$j$ overlap in $\X$; i.e.\ there are $m' < m$ components in $\X$ with at least one component having more than one depots.
  Denote the component with depots $d_i$ and $d_j$ by $X$.
 
  The main idea is to disconnect the depot $d_j$ from $X$ by applying shortcuts.
  We do this by decomposing $X$ into cycles, and considering each cycle $C$ containing $d_j$.
  In each such cycle $C$, we replace the edges $d_j u$ and $d_j v$ by $uv$.
  Due to the possibly exponential number of occurrences of $d_j$ in cycles one wants to refrain performing these edge replacements one-by-one.
  It is possible to decompose a multigraph into a polynomial number of cycles with multiplicities; this is equivalent of finding a cycle-decomposition of a circulation, this is described in detail in the proof of \Cref{lem:shortcuts}.
  Thus we can perform these replacements by changing the multiplicity of a polynomial number of edges, yielding to an efficient reduction.
  
  As a result of the operations above, $d_j$ becomes an isolated vertex and vertices $u$ and $v$ have the same degree as before.
  Moreover, because of the triangle inequality, the transformation do not increase the cost of~$\X$.
  If we repeat this operation until there is exactly one depot in each component of~$\X$, then the resulting multigraph $\X$ is a feasible solution to P8 with cost at most $\OPT_{\text{P7}}$.
  Therefore $\OPT_{\text{P7}} \geq \cost(\X) \geq \OPT_{\text{P8}}$ holds.

  Now assume that $Y^\star$ is an optimal solution to P8, with cost $\OPT_{\text{P8}}$.
  According to \Cref{clm:sumc_feasible}, the multigraph $Y^\star$ is a feasible solution to P7, therefore $\OPT_{\text{P7}} \leq \OPT_{\text{P8}}$ holds, concluding the proof.
\end{proof}

\begin{figure}[!ht]
  \centering
  \includegraphics[scale=0.65]{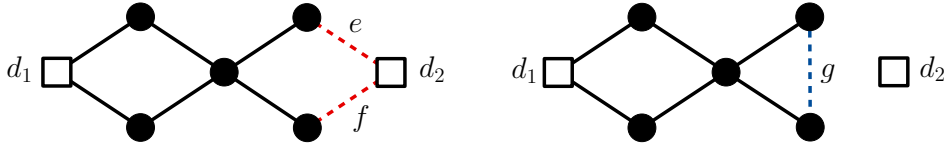}
  \caption{
  Illustration of \Cref{clm:seven_eight}, where the visit requirement of the middle vertex is $2$, while for the rest of the vertices it is $1$.
  Because of the triangle inequality, $c(g)\leq c(e)+c(f)$.
  The two tours overlap in the middle vertex.}
\end{figure}

Based on \Cref{clm:three_four,clm:seven_eight}, we do not distinguish between problems P3 and P4, as well as between problems P7 and P8.
For this reason we will just refer to these problems as unrestricted MV-mTSP with empty tours (unrestricted MV-mTSP\0) and MV-mTSP with empty tours (MV-mTSP\0), respectively.

On the other hand, if we do not allow empty tours, then the optimal solutions for the arbitrary and the disjoint tours variants can be different.

\begin{claim}
\label{clm:onetwo_fivesix}
  P1 $\not\equiv$ P2, P5 $\not\equiv$ P6.
\end{claim}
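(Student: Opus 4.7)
The plan is to exhibit, for each of the two inequalities, a concrete small instance on which the optimum with arbitrary (overlapping) tours is strictly less than the optimum with vertex-disjoint tours. The guiding idea in both cases is the same: I will choose one non-depot with request $2$ and another with request $1$; in the arbitrary-tours variant two agents can share the visits to the heavily-requested city and each piggy-back on a cheap incident edge, whereas vertex-disjointness will force one agent to absorb all visits to that city and pay for an expensive self-loop.

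For P1 $\not\equiv$ P2, I would take $V = \{v_1, v_2\}$ with $r(v_1) = 2$, $r(v_2) = 1$, $m = 2$, and metric costs $c(v_1 v_2) = 1$ and $c(v_1 v_1) = c(v_2 v_2) = 2$. The witness solution for P1 would be agent~$1$'s tour $v_1 v_2 v_1$ (cost $2$) together with agent~$2$'s self-loop at $v_1$ (cost $2$), giving total $4$. To confirm that this is optimal I would do a short case analysis on the number $z$ of $v_1 v_2$-edges in a feasible multigraph: the degree equations $2x + z = 2 r(v_1) = 4$ and $2y + z = 2 r(v_2) = 2$, where $x, y$ count self-loops at $v_1, v_2$, force $z \in \{0, 2\}$ and hence total cost $6$ or $4$ respectively. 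For P2 the only partition of the two vertices into non-empty vertex-disjoint tours is $\{v_1\}$ and $\{v_2\}$, forcing two self-loops at $v_1$ (cost $4$) and one at $v_2$ (cost $2$), so $\OPT_{\text{P2}} = 6 > 4 = \OPT_{\text{P1}}$.

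For P5 $\not\equiv$ P6 I would augment the above instance with two depots $D = \{d_1, d_2\}$, keeping the same requests and $m = 2$, and assigning metric costs $c(d_i v_j) = c(v_1 v_2) = 1$ and $c(d_1 d_2) = c(d_i d_i) = c(v_i v_i) = 2$ (triangle inequality and the self-loop bound $c(vv) \leq 2\cmin(v)$ are immediate). The witness solution for P5 would consist of agent~$1$'s tour $d_1 v_1 v_2 d_1$ (cost $3$) and agent~$2$'s tour $d_2 v_1 d_2$ (cost $2$), for total $5$; a total-degree count using $\ddelta_X(v_1) = 4$, $\ddelta_X(v_2) = 2$, and $\ddelta_X(d_i) \geq 2$ for each depot (the latter by non-emptiness and \Cref{lem:depotonce}) shows any feasible P5-solution contains at least $5$ edges, each of cost at least $1$, whence $\OPT_{\text{P5}} = 5$. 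For P6 the vertex-disjointness constraint forces the two non-depots to be split across the two tours, and the unique cheapest split sends $v_1$ to one depot (requiring a self-loop at $v_1$, giving tour cost $1 + 2 + 1 = 4$) and $v_2$ to the other (tour cost $2$), whence $\OPT_{\text{P6}} = 6 > 5 = \OPT_{\text{P5}}$.

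The hardest part of the argument will be certifying optimality of the overlapping-tour witness solutions (not merely their feasibility). In both instances this reduces to bookkeeping on the degree identity $\ddelta_X(v) = 2 r(v)$ for $v \in \V$ and on the requirement that each of the $m$ tours in the edge-decomposition of $X$ be non-empty, together with the inventory of cheap edges permitted by the metric costs. The instances are deliberately kept minimal so that the resulting case analysis is essentially by inspection.
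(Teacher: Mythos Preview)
Your proposal is correct. Both you and the paper establish the claim by exhibiting a small instance in which one city has request~$2$ and the remaining ones have request~$1$, so that two agents can profitably share the visits to the heavy city when overlap is allowed but must pay more when disjointness is enforced. The difference is in the concrete witnesses: the paper uses a single four-vertex example (two depots and two cities, with edges $e,f,g,h$ chosen so that $c(e)+c(f)>c(g)+c(h)$) and derives the P1/P2 case from the P5/P6 one by re-interpreting the depots as ordinary vertices with request~$1$; you instead build separate minimal instances, a two-vertex instance for P1/P2 that exploits expensive self-loops, and a four-vertex instance for P5/P6. Your route trades a unified picture for explicit optimality certificates via degree bookkeeping, which makes the argument fully self-contained rather than figure-dependent; the paper's route is terser but leaves the verification of the strict inequality to the reader.
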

\begin{proof}
  According to \Cref{clm:sumc_feasible}, solutions to problems P2 and P6 are feasible solutions to P1 and P5, respectively.
  Moreover, the only constraint that problems P2 and P6 impose on the feasible solutions in addition to the constraints of P1 and P5 is that the tours of the agents have to be disjoint.
  We show that the disjointness requirement is meaningful, i.e.\ an optimal solution to an instance of problem P2 or P6 can have strictly higher cost than an optimal solution to the corresponding instance of P1 or P5, respectively.

  In case of problems P5 and P6, the example in \Cref{fig:onetwo_fivesix} shows that, by appropriately setting the edge costs, the solution on the right has significantly higher cost than the solution on the left.
  For problems P1 and P2 the same example demonstrates the claim, by assuming that $d_1$ and $d_2$ are regular (non-depot) vertices with a request of $1$.  
\end{proof}

\begin{figure}[h]
  \centering
  \includegraphics[scale=0.65]{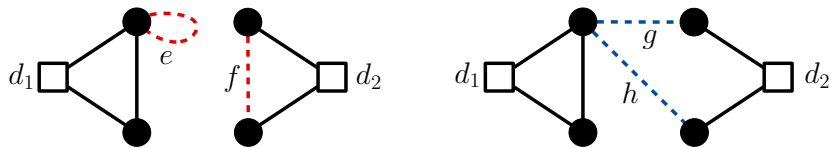}
  \caption{
  Illustration of \Cref{clm:onetwo_fivesix}, where the visit requirement of the top left vertex is $2$, while for the rest of the vertices it is $1$.
  Because of the triangle inequality, $c(e)\leq c(g)+c(h)$ and $c(f) \leq c(g)+c(h)$.
  If $c(e)+c(f)> c(g)+c(h)$ holds, then $g$ and $h$ together costs less than edges $e$ and $f$.}
  \label{fig:onetwo_fivesix}
\end{figure}

Finally, we show that allowing empty tours in the solutions makes a significant difference.
\begin{claim}
\label{clm:two_four}
   P1 $\not\equiv$ P3, P2 $\not\equiv$ P4.
\end{claim}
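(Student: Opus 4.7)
The plan is to exhibit a single tiny instance that simultaneously separates P1 from P3 and P2 from P4. Take $m=2$, two vertices $u,v$ with $r(u)=r(v)=1$, edge cost $c(uv)=1$, and self-loop costs $c(uu)=c(vv)=2$. This cost function is metric: the triangle inequality is forced to hold on a two-point set, and each self-loop satisfies $c(vv)=2\cdot\cmin(v)$ as required in \Cref{sec:introduction}.

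For P3 (and hence P4 by \Cref{clm:three_four}), I would point to the single MVTSP tour that traverses the edge $uv$ twice; this visits each vertex exactly once, uses only one of the two available agents, and has cost $2c(uv)=2$, so $\OPT_{\text{P3}}=\OPT_{\text{P4}}\leq 2$. The substantive step is to show that every feasible P1 solution costs at least $4$. The requests force $\ddelta_X(u)=\ddelta_X(v)=2$ in any feasible $X$, so after fixing this total degree budget one has only a short list of configurations to check: any tour containing the edge $uv$ must in fact contain two copies of it (to keep all degrees even) and hence exhaust both endpoints' entire budgets, leaving no room for a second non-empty tour; placing two self-loops at a single vertex overshoots its degree; and the only remaining configuration is one self-loop at $u$ paired with one self-loop at $v$, whose cost is $c(uu)+c(vv)=4$. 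Since this configuration is also vertex-disjoint, it is feasible for P2 as well, so combined with $\OPT_{\text{P2}}\geq\OPT_{\text{P1}}$ from \Cref{clm:sumc_feasible} we obtain $\OPT_{\text{P2}}=4$.

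Combining the two bounds yields $\OPT_{\text{P1}}=\OPT_{\text{P2}}=4>2\geq\OPT_{\text{P3}}=\OPT_{\text{P4}}$, which is exactly what the claim asks for. The only potentially delicate point is the case analysis ruling out ``hybrid'' two-tour solutions in P1, but the tight degree budget of $2$ at each vertex collapses this enumeration to the three cases above, so no deeper structural argument is needed. A figure analogous to the one accompanying \Cref{clm:onetwo_fivesix} could accompany the proof, but is not strictly necessary.
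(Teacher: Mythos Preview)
Your proof is correct. Both the paper and you exhibit an explicit instance on which $\OPT_{\text{P1}}>\OPT_{\text{P3}}$ and $\OPT_{\text{P2}}>\OPT_{\text{P4}}$, but the constructions are genuinely different. The paper takes an $n$-cycle equipped with its shortest-path metric and all requests equal to~$1$; for any $m\geq 1$ the single Hamiltonian cycle is optimal for P3/P4 with cost~$n$, whereas forcing exactly $m$ non-empty tours in P1/P2 requires replacing unit-cost edges by longer chords. Your construction instead works on just two vertices with $m=2$ and exploits non-zero self-loop costs: the tight degree budget $\ddelta_X(u)=\ddelta_X(v)=2$ leaves only two candidate multigraphs, and the parity argument (any tour using the cross-edge $uv$ must use it an even number of times, hence at least twice, exhausting the budget) rules out all two-tour decompositions except the pair of self-loops.

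Your example is more self-contained and makes essential use of self-loops, a feature specific to the many-visits setting that the paper's cycle example does not touch. The paper's example, on the other hand, works uniformly for every $m$ and separates the optima without invoking self-loops at all, which makes it closer in spirit to the single-visit literature. Either argument is perfectly adequate for the claim; one small phrasing remark is that your sentence ``any tour containing the edge $uv$ must in fact contain two copies of it'' is true but deserves the one-line justification that the number of $uv$-copies in a closed walk must be even (since a self-loop at $u$ contributes $2$ to $\ddelta_{X_i}(u)$, parity is governed solely by the $uv$-multiplicity), after which the rest of your case analysis goes through exactly as stated.
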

\begin{proof}
  Our goal is to show that allowing less than $m$ components might result in an optimal solution having cost strictly less than that of one with exactly $m$ components.
  This can happen in both the overlapping tours and the disjoint tours variants.  
  
  Let $\hat{G}=(V,\hat{E})$ be a cycle of length $n:=\abs{V}$. 
  Now define a complete graph $G$ on the same vertex set $V$, where the cost of edge $v_iv_j$ is equal to the shortest path distance between~$v_i$ and~$v_j$ in $\hat{G}$.
  Moreover, let $r(v)=1$ for all $v \in V$.
    For any $m \in \mathbbm{Z}_{\geq 1}$, the optimal solution for the unrestricted MV-mTSP\0 is the cycle $v_1v_2,v_2v_3,\dots,v_{n-1}v_n,v_nv_1$ with cost $n$.
  However, if we require exactly $m$ components in our solution, the optimal solution has to use edges with costs larger than~$1$, resulting in a solution with a cost larger than~$n$.
\end{proof}

An analogous statement is true when depots are present:
\begin{claim}
\label{clm:six_eight}
  P5 $\not\equiv$ P7, P6 $\not\equiv$ P8.
\end{claim}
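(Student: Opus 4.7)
The plan is to adapt the construction from \Cref{clm:two_four} to the depot setting by placing the depots far from the non-depot vertices, so that forcing every depot into a non-empty tour is strictly more expensive than letting most depots remain isolated. Concretely, I would take $m\geq 2$ and build an auxiliary graph $\hat G$ whose non-depot vertices $v_1, \ldots, v_n$ form a cycle with unit-weight edges, and additionally attach each depot $d_j$ to $v_1$ by an edge of weight $L$ for a large parameter $L$ to be fixed later. Setting $r(v_i) = 1$ for every $i$ and defining $c$ as the shortest-path metric of $\hat G$ gives a valid metric MV-mTSP instance.

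For P7 (equivalently P8 by \Cref{clm:seven_eight}), I would exhibit the feasible solution consisting of the single tour $d_1, v_1, v_2, \ldots, v_n, d_1$ of cost $L + (n-1) + (1+L) = 2L + n$ together with $m-1$ empty components at $d_2, \ldots, d_m$; this yields $\OPT_{\text{P7}} \leq 2L + n$. For P5 (and hence P6, since P6 is more constrained than P5), each of the $m$ depots must belong to a non-empty tour. By \Cref{lem:depotonce} we may assume each depot is visited exactly once and, by metricity, carries no self-loop, so it is incident to exactly two non-loop edges in the solution. Because $\cmin(d_j) = L$ for every $j$ in this instance, each such edge costs at least $L$, giving $\OPT_{\text{P5}}, \OPT_{\text{P6}} \geq 2mL$. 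Choosing $L > n/(2(m-1))$ then produces $\OPT_{\text{P5}} > \OPT_{\text{P7}}$ and $\OPT_{\text{P6}} > \OPT_{\text{P8}}$ simultaneously, which is precisely the claim.

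The only delicate point is the lower-bound step: one must argue that every depot-incident edge used by a P5 or P6 solution really does have cost at least $L$, even after taking shortcuts allowed by the metric. This is clean in our instance because the closest vertex to any $d_j$ in the metric $c$ is $v_1$ at distance exactly $L$, while any other depot sits at distance $2L$; hence no shortcut can create a cheaper depot-incident edge. The factor of $2$ in the bound $2mL$ then comes simply from the observation that a non-empty tour at a depot must enter and leave it, consuming two such edges.
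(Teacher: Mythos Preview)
Your construction is correct and proves the claim. One small omission: you should require $n\geq m$ (so that P5 and P6 admit feasible solutions at all), and the phrase ``by metricity, carries no self-loop'' is slightly off---self-loops at depots can be dropped simply because edge costs are non-negative, not because of the triangle inequality. Neither point affects the argument.

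The paper's proof is considerably simpler: it takes \emph{every} edge cost equal to $1$. Then for P7/P8 the single tour containing one depot and all cities, with the remaining $m-1$ depots isolated, has cost equal to its number of edges, namely $\sum_{v\in\V} r(v)+1$. For P5/P6, \Cref{lem:depotonce} forces each of the $m$ depots to have degree exactly $2$, so any feasible solution has $\sum_{v\in\V} r(v)+m$ edges and hence that cost; this is strictly larger whenever $m>1$. With unit weights the comparison reduces to an edge count, so no auxiliary parameter $L$ or separate lower-bound argument is needed. Your approach, by contrast, makes the source of the gap more explicit---it is the unavoidable cost $2L$ of attaching each depot to the cities---and would scale to show that the ratio $\OPT_{\text{P5}}/\OPT_{\text{P7}}$ can be made arbitrarily large, something the unit-weight example does not directly give.
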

\begin{proof}
  Consider an instance where the cost of every edge is $1$, independently from whether it goes between two cities or between a depot and a city.
  It is not difficult to verify that an optimal solution to MV-mTSP\0 with disjoint tours consists of one large connected component and $m-1$ isolated depots.
  Therefore the optimal cost (that is, the number of edges in the solution) is $\sum_{v \in V} r(v) + 1$.
  On the other hand, according to \Cref{lem:depotonce}, in an optimal solution to MV-mTSP\+ with disjoint tours the degree of every depot $d \in D$ is $2$, therefore the optimal cost is $\sum_{v \in V} r(v) + m$, which is larger than the optimal cost of MV-mTSP\0 with disjoint tours whenever $m>1$. 
  The same reasoning holds if we allow the tours to overlap.
\end{proof}

\begin{remark}  
  It is worth mentioning that the problems do not always admit a feasible solution. This happens when empty tours are not allowed, and either the total visit requirements $\sum_{v\in V} r(v)$ is strictly less than the number of agents/depots in the arbitrary setting, or the number of cities is strictly less than the number of agents/depots in the disjoint setting. 
\end{remark}

\section{Approximation algorithms for many-visits mTSP with arbitrary tours}
In this section, we provide a $3$- and a $4$-approximation to MV-mTSP\+ and unrestricted MV-mTSP\+ with arbitrary tours, respectively, and due to \Cref{lem:sumc_reduction}, these results carry over to MV-mTSP\0 and unrestricted MV-mTSP\0 as well.
The algorithms share the same basic ideas, hence we first present the notions and techniques that are common to all algorithms.
Throughout \Crefrange{lem:tp_relax_ma}{lem:tp_relax_md}, we assume that the cost function $c$ satisfies the triangle inequality.

\paragraph{The transportation problem.}
Observe that if we relax the connectivity and decomposability constraints from the unrestricted MV-mTSP variants (similarly to the single-salesman case~\cite{CosmadakisPapadimitriou1984,BergerKMV2020,BercziMV2020}), the problem can be modelled as the Hitchcock transportation problem~\cite{Hitchcock1941}.
The transportation problem can be solved efficiently using a min-cost max-flow algorithm~\cite{EdmondsKarp1970}, even in strongly polynomial time~\cite{Orlin1993, KleinschmidtSchannath1995}.
Due to the problem definition, the cost of the optimal transportation problem solution provides a lower bound on the corresponding unrestricted MV-mTSP optimal solutions:
  
\begin{lemma}
\label{lem:tp_relax_ma}
  Let $\X$ be an optimal solution to any of the unrestricted MV-mTSP variants (arbitrary, disjoint or empty tours) on an instance $(G, c, r)$.
  Let $\TP$ be an optimal transportation problem solution on $(G, c, r)$.
  Then, $\cost(\TP) \leq \cost(\X)$. \hfill \qed
\end{lemma}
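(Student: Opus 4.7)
The plan is to exhibit the transportation problem as a pure degree-constrained relaxation of the unrestricted MV-mTSP, obtained by dropping both the connectivity and the decomposability requirements on the multigraph while keeping only the visit counts. Concretely, the Hitchcock LP here asks for nonnegative integer edge multiplicities $(y_{uv})_{u,v\in V}$ that minimize $\sum_{u,v} c(uv)\,y_{uv}$ subject to $\ddelta_{Y}(v)=2r(v)$ for every $v\in V$, where $Y$ is the multigraph with these multiplicities (recall that under the notation $\ddelta$ the self-loop at $v$ contributes $2$ to $\ddelta_Y(v)$). Once this formulation is in place, the lemma becomes the statement that every feasible MV-mTSP multigraph satisfies these constraints.

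First I would verify exactly that. Let $\X$ be any feasible solution to one of the three unrestricted variants (arbitrary, disjoint, or empty tours). In each of these variants $\X$ decomposes into closed walks (MVTSP tours) in which every vertex $v$ appears precisely $r(v)$ times. Each appearance of $v$ on such a closed walk is entered along one edge and left along one edge; a self-loop at $v$ consumes one appearance but contributes $2$ to $\ddelta_{\X}(v)$, which matches the self-loop convention. Summing the contributions across the $r(v)$ appearances of $v$ across all tours in $\X$ therefore gives $\ddelta_{\X}(v)=2r(v)$ for every $v\in V$. In particular, the edge multiplicities of $\X$ form a nonnegative integer vector that satisfies every transportation-problem constraint.

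The rest is immediate. Interpreting the edge-multiset of $\X$ as a feasible transportation solution of cost $\cost(\X)$ and invoking optimality of $\TP$ among all such solutions yields $\cost(\TP)\leq \cost(\X)$, which is the desired inequality. No use is made of the triangle inequality, which is consistent with the fact that the lemma is valid in all three unrestricted variants.

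The only point that requires any real care is the bookkeeping for self-loops: one has to use the $\ddelta$ convention so that a self-loop at $v$ accounts for exactly one visit (degree contribution $2$) rather than for two visits, and one has to check that this convention is consistent with how ``visits'' are counted in the problem definitions of P1--P4. Beyond that the argument is purely definitional, and I would expect the proof in the paper to occupy only a few lines.
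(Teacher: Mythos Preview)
Your proposal is correct and matches the paper's approach: the paper treats this lemma as immediate (note the \qed attached to the statement itself), observing just before it that the transportation problem is obtained by relaxing the connectivity and decomposability constraints, so any feasible MV-mTSP solution is feasible for the transportation problem and the bound follows. Your more detailed verification of the degree bookkeeping (including the self-loop convention) simply spells out what the paper leaves implicit.
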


For request vectors $r, r': V \rightarrow \mathbbm{Z}_{\geq 1}$, let us denote by $r \preccurlyeq r'$ if $r(v) \leq r'(v)$ holds for all~$v \in V$.
\begin{lemma}
\label{lem:tp_compare}
  Let $\TP$ and $\TP'$ be optimal transportation problem solutions for the instances $(G, c, r)$ and $(G, c, r')$, respectively.
  If $r \preccurlyeq r'$ holds, then $\cost(\TP) \leq \cost(\TP')$.
\end{lemma}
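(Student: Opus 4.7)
The plan is to establish the inequality by iterated local moves on optimal transportation solutions. Recall that the Hitchcock transportation problem on $(G, c, r)$ admits an integer optimum, which I view as a multiset of directed edges in which each vertex $v$ appears exactly $r(v)$ times as a tail and $r(v)$ times as a head. Setting $\Delta := \sum_{v \in V} (r'(v) - r(v)) \geq 0$, I prove by induction on $\Delta$ that $\cost(\TP) \leq \cost(\TP')$; the base case $\Delta = 0$ gives $r = r'$ and is immediate. For the inductive step it suffices to show that, for every $v \in V$ with $r'(v) \geq 1$, the optimal transportation cost for the request vector $r' - e_v$ is at most that for $r'$ (here $e_v$ denotes the $v$-th unit vector). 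Iterating this bound once per unit of excess at each vertex then yields the lemma.

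To prove the one-step claim, I take an optimal integer solution for $(G, c, r')$, call it $\TP'$, and modify it locally at $v$ so that the result is a feasible (and generally suboptimal) solution for $(G, c, r' - e_v)$ of no greater cost. If $\TP'$ contains at least one copy of the self-loop $(v, v)$, I remove one copy: both the tail- and the head-count at $v$ decrease by one, every other marginal is preserved, and the cost decreases by $c(vv) \geq 0$. Otherwise $\TP'$ has no self-loop at $v$, so some outgoing copy from $v$ lands at a vertex $w \neq v$ and some incoming copy to $v$ originates from a vertex $u \neq v$; I replace one copy each of the edges $(v, w)$ and $(u, v)$ with a single copy of $(u, w)$. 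This preserves every marginal except at $v$, where the tail- and head-counts both drop by one, so the resulting multiset is feasible for $r' - e_v$, and the cost changes by $c(uw) - c(uv) - c(vw)$, which is nonpositive by the triangle inequality.

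The only subtlety I expect is the degenerate subcase $u = w$ of the second scenario: the replacement edge is then itself a self-loop $(u, u)$, and the relevant triangle inequality reduces to $c(uu) \leq c(uv) + c(vu) = 2c(uv)$. This last bound is precisely the standing metric assumption $c(uu) \leq 2\cmin(u)$ applied at $u$, so it holds automatically. With this boundary case verified, the induction closes and the lemma follows.
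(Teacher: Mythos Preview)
Your proof is correct and follows essentially the same approach as the paper's: both arguments start from an optimal solution for the larger request vector $r'$ and repeatedly reduce the degree at a surplus vertex $v$ by either deleting a self-loop at $v$ or shortcutting a pair of edges $uv$, $vw$ into $uw$, relying on non-negativity and the triangle inequality (including the self-loop bound $c(uu)\leq 2\cmin(u)$ when $u=w$) to ensure the cost never increases. Your explicit induction on $\Delta=\sum_v (r'(v)-r(v))$ and the directed Hitchcock formulation are only cosmetic differences from the paper's undirected degree-reduction argument.
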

\begin{proof}
  Observe that $\ddelta_\TP(v) = 2 \cdot r(v)$ and $\ddelta_{\TP'}(v) = 2 \cdot r'(v)$ hold by definition.
  If $\ddelta_\TP(\cdot) \equiv \ddelta_{\TP'}(\cdot)$, we are done.
  Otherwise, there is at least one vertex $v \in V(G)$, such that $\ddelta_\TP(v) < \ddelta_{\TP'}(v)$.
  Apply shortcuts in~$\TP$, until $\ddelta_\TP(v) = \ddelta_{\TP'}(v)$ holds for every $v \in V(G)$ by performing the following two operations in total of $\nicefrac{(\ddelta_{\TP'}(v)-\ddelta_\TP(v))}{2}$ times:
  \begin{enumerate}[label=(\Alph*)]
     \item If $v$ is incident to copies of self-loops $vv$, remove a copy of $vv$.
     \item If there exist not necessarily distinct vertices $u, w$ adjacent to $v$, replace a copy of $uv$ and $vw$ by a copy of $uw$.
  \end{enumerate}
  Operation (A) does not increase the cost of $\TP'$, as the edge costs are non-negative.
  Due to the triangle inequality, operation (B) does not increase the cost of $\TP$ either.
  Therefore, the cost of the resulting multigraph $\TP''$ will be $\cost(\TP'') \leq \cost(\TP')$.   
  Moreover, since both operations (A) and (B) decrease the degree of a vertex by $2$, applying the operations $\nicefrac{(\ddelta_{\TP'}(v)-\ddelta_\TP(v))}{2}$ times will result in $v$ having a degree of $\ddelta_\TP(v)$ in $\TP''$.
 
  Perform the operations above for every vertex $v \in V$.
  Now the multigraph $\TP$ has the same degree sequence as $\TP''$. Moreover, $\TP$ is a minimum cost such multigraph.
  Hence $\cost(\TP) \leq \cost(\TP'')$ holds, and the lemma follows.
\end{proof}

Now we show that the claim of \Cref{lem:tp_relax_ma} also holds for the problem variants with depots.

\begin{lemma}
\label{lem:tp_relax_md}
  Let $\X$ be an optimal solution to any of the MV-mTSP variants (arbitrary, disjoint or empty tours) on an instance $(G, c, r)$.
  Let $\TP$ be an optimal transportation problem solution on $(G, c, r)$.
  Then, $\cost(\TP) \leq \cost(\X)$.
\end{lemma}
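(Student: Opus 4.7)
The plan is to transform an optimal solution $\X$ into a feasible transportation problem solution on $(G, c, r)$ without increasing its cost, by eliminating all depot vertices through shortcuts and then invoking optimality of $\TP$. First, by \Cref{lem:depotonce} I may assume that $\X$ visits every depot at most once, so each $d \in D$ satisfies $\ddelta_\X(d) \in \{0, 2\}$ and carries no self-loops. Recall that the transportation problem on $(G, c, r)$ is indifferent to connectivity and asks only for a minimum-cost multigraph whose degree at each $v \in \V$ equals $2r(v)$.

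Next, for every depot $d$ with $\ddelta_\X(d) = 2$, let $u$ and $w$ (possibly $u = w$) be its two neighbors in $\X$, and replace the pair of edges $ud, dw$ by a single edge $uw$. If $u \neq w$, the triangle inequality yields $c(uw) \leq c(ud) + c(dw)$. If $u = w$, the new edge is the self-loop $uu$, and the metric self-loop assumption combined with $\cmin(u) \leq c(ud)$ gives $c(uu) \leq 2\cmin(u) \leq c(ud) + c(du)$. In either case the total cost does not increase, the degrees of $u$ and $w$ are preserved (each loses one edge to $d$ and gains one edge to the other), and $d$ becomes isolated.

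Performing this operation at every depot produces a multigraph $X'$ on $V$ with $\cost(X') \leq \cost(\X)$ and $\ddelta_{X'}(v) = \ddelta_\X(v) = 2r(v)$ for every $v \in \V$, so $X'$ is a feasible solution to the transportation problem on $(G, c, r)$; optimality of $\TP$ then yields $\cost(\TP) \leq \cost(X') \leq \cost(\X)$. I do not expect a real obstacle here: \Cref{lem:depotonce} already handles the only delicate point, namely that depots might be visited an exponential number of times in an arbitrary feasible solution, and after that reduction a single round of triangle-inequality shortcuts together with the metric self-loop bound suffice.
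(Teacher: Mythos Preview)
Your argument is correct, but it follows a different route from the paper. The paper does \emph{not} invoke \Cref{lem:depotonce} at all; instead it keeps $\X$ as is, defines an auxiliary request vector $r'$ on all of $V$ by setting $r'(v):=r(v)$ for $v\in\V$ and $r'(d):=\ddelta_\X(d)/2$ for depots, and observes that $\X$ itself is feasible for the transportation problem on $(G,c,r')$, hence $\cost(\TP')\leq\cost(\X)$ for the optimal $\TP'$. It then applies \Cref{lem:tp_compare} with $r\preccurlyeq r'$ (treating $r(d)=0$) to conclude $\cost(\TP)\leq\cost(\TP')$. So the paper's key tool is the monotonicity lemma \Cref{lem:tp_compare}, whereas your key tool is \Cref{lem:depotonce} followed by an explicit single-pass shortcut at each depot. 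Your approach is arguably more elementary since it bypasses \Cref{lem:tp_compare} entirely and produces a feasible solution to the \emph{original} transportation instance directly; the paper's approach, in exchange, does not need to assume optimality of $\X$ (it never normalises depot degrees via \Cref{lem:depotonce}) and packages the degree-reduction step into the reusable monotonicity lemma. Both arguments rely on the same metric ingredients (triangle inequality and the self-loop bound $c(vv)\leq 2\cmin(v)$), just invoked at different stages.
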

\begin{proof}
  Let us define an auxiliary graph $G'$ on the vertex set $V$ with edge costs $c'(uv) := c(uv)$ for every vertex $u, v \in V$.
  Define the requests of vertices $v \in \V$ by $r'(v) := r(v)$, and the requests of the depots $r'(d) := \ddelta_{X^\star}(d)/2$ for $d \in D$.
  Calculate an optimal transportation problem solution $\TP'$ on $(G', c', r')$.
  
  $\TP'$ is a minimum cost multigraph with the same degree sequence as $\X$, while the connectivity and decomposability constraints are relaxed.
  Hence $\cost(\TP') \leq \cost(\X)$.
  Moreover, as the depot set $D$ is not covered by $\TP$, one can think of the requests of $d \in D$ in the instance $(G,c,r)$ as $r(d)=0$.
  Since $r(v)=r'(v)$ for vertices $v \in \V$ and $r(d) \leq r'(d) = \ddelta_\X(d)$ for $d \in D$, $r \preccurlyeq r'$ holds.
  Due to \Cref{lem:tp_compare}, $\cost(\TP) \leq \cost(\TP')$ follows, which, together with $\cost(\TP') \leq \cost(\X)$, proves the claim.
\end{proof}

As a direct consequence of \Crefrange{lem:tp_relax_ma}{lem:tp_relax_md}, we have the following corollary.

\begin{corollary}
\label{lem:tp_relax}
  Let $(G,c,r)$ be an instance of any unrestricted MV-mTSP variants \textbf{or} any MV-mTSP variants, and let $\X$ be an optimal solution to this instance.
  Let $\TP'$ be an optimal solution to the transportation problem on $(G, c, r')$ with $r' \preccurlyeq r$.
  Then, $\cost(\TP') \leq \cost(\X)$.
\end{corollary}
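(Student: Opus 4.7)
The plan is to chain together the two preceding lemmas. Let $\TP$ denote an optimal transportation problem solution on the original instance $(G,c,r)$. Depending on which family of problems $\X$ solves, I would invoke either \Cref{lem:tp_relax_ma} (if $\X$ is an optimal solution to an unrestricted MV-mTSP variant) or \Cref{lem:tp_relax_md} (if $\X$ is an optimal solution to an MV-mTSP variant with depots). In either case, the conclusion is the same: $\cost(\TP)\leq\cost(\X)$.

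Next I would apply \Cref{lem:tp_compare} to the two transportation instances $(G,c,r')$ and $(G,c,r)$. Since the hypothesis gives $r'\preccurlyeq r$, that lemma yields $\cost(\TP')\leq\cost(\TP)$. Combining the two inequalities gives $\cost(\TP')\leq\cost(\TP)\leq\cost(\X)$, which is exactly the desired bound.

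There is essentially no obstacle here, since the statement is explicitly advertised as a direct consequence of the two lemmas, and no new combinatorial argument is required. The only small point to be careful about is that \Cref{lem:tp_compare} is stated for request vectors in $\mathbbm{Z}_{\geq 1}$, whereas the proof of \Cref{lem:tp_relax_md} already handles the depot case by interpreting depot requests as $0$; so I would briefly note that the comparison between $r'$ and $r$ can be carried out on the common support (the non-depot vertices in the restricted case) without affecting the argument, since the transportation problem simply ignores vertices of request $0$. With that remark in place, the corollary follows in two lines.
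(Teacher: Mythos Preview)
Your proposal is correct and matches the paper's approach: the corollary is stated without proof as a direct consequence of \Cref{lem:tp_relax_ma,lem:tp_compare,lem:tp_relax_md}, and your chaining of \Cref{lem:tp_compare} with either \Cref{lem:tp_relax_ma} or \Cref{lem:tp_relax_md} is exactly the intended derivation. Your extra remark about handling depot vertices with request~$0$ is a welcome clarification that the paper glosses over.
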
  

\subsection{Unrestricted many-visits mTSP with arbitrary tours}

  We first consider the problem variants without depots.
 Let $G(V,E)$ be a complete graph, $c(uv)$ be metric and symmetric edge costs, and $r(v)$ be requirements for every vertex $v \in V$.
  In all variants of unrestricted MV-mTSP, we seek a minimum cost multigraph $\X$ that visits each vertex $v \in V$ a total of $r(v)$ times.
We present a simple $4$-approximation for the unrestricted MV-mTSP\+ with arbitrary tours.

\begin{algorithm}[!ht]
  \caption{A $4$-approximation for the unrestricted MV-mTSP\+ with arbitrary tours  \label{alg:mamvtsp_3apx}}
  \begin{algorithmic}[1]
    \Statex \textbf{Input:} A complete undirected graph $G(V,E)$, costs $c:E\rightarrow\mathbbm{R}_{\geq 0}$ satisfying the triangle inequality, requests $r:V\rightarrow\mathbbm{Z}_{\geq 1}$, number of agents $m$.
    \Statex \textbf{Output:} $m$ non-empty MVTSP tours that visit each $v \in V$ a total of $r(v)$ times, or NO if no solution exists. 
    \State If $m>r(V)$, then \textbf{return} NO.
    \State If $m>n$, then for each $v\in V$ add $r(v)$ copies of the self-loop $vv$ to $X$. Take a partition $X:=X_1\cup\dots\cup X_m$ of $X$ into $m$ non-empty sets such that $X_i$ contains copies of the same self-loop for each $i=1,\dots,m$, and \textbf{return }$X := X_1 \cup \dots \cup X_m$.
    \State  Determine a minimum cost spanning forest $F$ of $G$ consisting of $m$ components, and let $F_1, \dots, F_m$ denote its components.\label{st:simple_ma_forest}
    \State For $i=1,\dots,m$, if $\abs{V(F_i)}\geq 2$ then let $H_i$ denote the cycle on $V(F_i)$ obtained by duplicating the edges of $F_i$ and applying shortcuts, otherwise let $H_i$ consist of a single copy of the loop on the vertex in $V(F_i)$. \label{st:simple_ma_cycles}
    \State For each $v\in V$, pick an index $i$ with $v \in V(H_i)$, and add $r(v)-1$ copies of the self-loop $vv$ to $H_i$. Denote the resulting multigraphs by $X_i$ for $i=1,\dots,m$.\label{st:simple_ma_loops}
   \Statex  \textbf{return} $X := X_1 \cup \dots \cup X_m$
  \end{algorithmic}
\end{algorithm}

\begin{theorem}
\label{thm:mamvtsp_3apx}
\Cref{alg:mamvtsp_3apx} provides a $4$-approximation for unrestricted MV-mTSP\+ with arbitrary tours in time polynomial in~$n$,~$m$ and~$\log r(V)$.
\end{theorem}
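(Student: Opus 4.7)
The plan is to verify that the returned multigraph $X$ is a feasible output and then bound $\cost(X)$ by $4\cdot\cost(\X)$ for an optimal solution $\X$. Feasibility is the easier half: each $X_i$ is connected (it is built from the cycle or self-loop $H_i$ on $V(F_i)$ by attaching only self-loops at vertices already in $V(H_i)$) and all degrees are even, so $X_i$ decomposes into a closed walk; because $F$ spans $V$, every $v$ lies in a unique $V(F_i)$, is visited once by $H_i$, and receives exactly $r(v)-1$ further self-loops in Step~\ref{st:simple_ma_loops}, yielding the correct total of $r(v)$ visits. The exit cases in Steps~1--2 cover the two regimes where Step~\ref{st:simple_ma_forest} cannot succeed: $m>r(V)$ is genuinely infeasible since each non-empty tour requires at least one visit, and $m>n$ is handled by distributing self-loops into $m$ non-empty singleton tours.

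For the forest cost, I would show $\cost(F)\leq\cost(\X)$. Let $G^\star$ denote the simple graph obtained from $\X$ by discarding self-loops and edge multiplicities. Because each tour $X^\star_i$ is connected, $G^\star$ has at most $m$ components and therefore contains a spanning subforest of $V$ with exactly $m$ components (a standard matroid/greedy step: take a maximum spanning forest of $G^\star$ and drop edges until the component count equals $m$). By minimality of $F$, its cost is at most the cost of this subforest, which is in turn at most $\cost(G^\star)\leq\cost(\X)$. Doubling $F$ and shortcutting in Step~\ref{st:simple_ma_cycles} does not increase cost by the triangle inequality, so $\sum_i\cost(H_i)\leq 2\cost(F)\leq 2\cost(\X)$.

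The main obstacle will be bounding the cost of the self-loops added in Step~\ref{st:simple_ma_loops}, which totals at most $\sum_v r(v)\,c(vv)$. I plan to prove $\sum_v r(v)\,c(vv)\leq 2\cost(\TP)$ for an optimal transportation-problem solution $\TP$ on $(G,c,r)$; combined with \Cref{lem:tp_relax_ma} this gives $\sum_v r(v)\,c(vv)\leq 2\cost(\X)$. The inequality follows from a per-vertex accounting: the total degree of $v$ in $\TP$ is $2r(v)$, and each non-loop edge at $v$ has cost at least $\cmin(v)\geq\tfrac12 c(vv)$, so splitting each non-loop edge's cost equally between its two endpoints lower-bounds $\cost(\TP)$ by $\tfrac12\sum_v r(v)\,c(vv)$. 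This step is precisely where the metric hypothesis $c(vv)\leq 2\cmin(v)$ is indispensable.

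Combining the two bounds gives $\cost(X)\leq 2\cost(F)+\sum_v r(v)\,c(vv)\leq 2\cost(\X)+2\cost(\X)=4\cost(\X)$. All operations---computing a minimum-cost spanning forest with $m$ components, doubling and shortcutting (which involves only polynomially many Eulerian steps on the non-loop part of $F$), and updating self-loop multiplicities in the compact representation---run in time polynomial in $n$, $m$ and $\log r(V)$, which yields the claimed approximation guarantee and running time.
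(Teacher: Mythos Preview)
Your proposal is correct and follows essentially the same approach as the paper: bound the doubled forest by $2\cost(\X)$ and the added self-loops by $2\cost(\X)$ via the transportation-problem lower bound, then add. The only cosmetic differences are that the paper phrases the forest bound through the component count $m^\star\leq m$ of $\X$ and the monotonicity $\cost(\MSF^m)\leq\cost(\MSF^{m^\star})$ (whereas you extract an $m$-component spanning forest directly from $\X$), and that your per-vertex accounting for $\sum_v r(v)\,c(vv)\leq 2\cost(\TP)$ should also acknowledge possible self-loops in $\TP$---these only strengthen the bound, since a self-loop at $v$ contributes cost $c(vv)$ against degree $2$.
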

\begin{proof}
  We first prove that \Cref{alg:mamvtsp_3apx} constructs a feasible solution to the unrestricted MV-mTSP\+ with arbitrary tours, and provide an upper bound to its cost.
  Finally, we analyze the time complexity.
  
\paragraph{Feasibility.}
  The cycles $H_i$ contribute a degree of $2$ while the loops contribute a degree of $2 \cdot (r(v)-1)$ to the degree of each $v \in V$. Therefore the total degree of each vertex $v$ in $X_1 \cup \dots \cup X_m$ is $2 \cdot r(v)$.
    
\paragraph{Cost of solution.}
  Let us denote by $m^\star$ the number of \textit{components} (which does not necessarily equal the number of MVTSP tours) in the optimal solution $\X$, where $1 \leq m^\star \leq m$.
  An optimal solution $\X$ contains a spanning forest~$F^\star$ having $m^\star$ components with $\cost(F^\star)$ being at least $\cost(\MSF^{m^\star})$, where $\MSF^{m^\star}$ denotes a minimum cost spanning forest having $m^\star$ components.
  It is easy to see that $\cost(F) \leq \cost(\MSF^{m^\star})$, because~$F$ is a minimum cost spanning forest with $m \geq m^\star$ components.  
  Hence $\cost(F) \leq \cost(F^\star) \leq \cost(X^\star)$ follows.
  
  Since the edge costs are metric, the shortcutting operations in \Cref{st:simple_ma_cycles} do not increase the cost of the multigraph, therefore the total cost of the \enquote*{non-loop $H_i$}s is at most $2 \cdot \cost(\X)$.
  We claim that the total cost of the self-loops added in \Cref{st:simple_ma_cycles,st:simple_ma_loops} is also at most $2 \cdot \cost(\X)$.
  Indeed, we have
\begin{equation}
\label{eq:loops_cost}
\sum_{v \in V} r(v)\cdot c(vv)\leq\sum_{v \in V} r(v)\cdot 2 \cdot \cmin(v) \leq 2 \cdot \cost(\TP^\star_{c,r}) \leq 2 \cdot \cost(\X),
\end{equation}
where the second inequality follows from the fact that ensuring a visit to a vertex $v$ costs at least $\cmin(v)$, and one needs $r(v)$ many of such visits.
  This verifies the approximation ratio.

\paragraph{Complexity analysis.}
Calculating a minimum cost spanning forest consisting of $m$ components can be done in time polynomial in~$n$ and $\log r(V)$.
  Throughout the algorithm, we use a compact representation of all multigraphs, this way the time and space complexity of graph operations can be bounded by $O(n^2 \log r(V))$, hence the remaining graph operations can also be done in polynomial time.
\end{proof}

By \Cref{lem:sumc_reduction}, an analogous result holds for the unrestricted MV-mTSP\0.

\begin{corollary}
\label{cor:mamvtsp_3apx}
There exists a $4$-approximation algorithm for unrestricted MV-mTSP\0 with running time polynomial in~$n$,~$m$ and~$\log r(V)$.
\end{corollary}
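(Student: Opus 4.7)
The plan is to invoke \Cref{lem:sumc_reduction} directly: it converts an instance of unrestricted MV-mTSP\0 (problem P3) into $m$ instances of unrestricted MV-mTSP\+ (problem P1), one for each choice of $\ell \in \{1,\dots,m\}$ agents, and guarantees that the cheapest among the $m$ resulting solutions has cost exactly $\OPT_{\text{P3}}$. So I would run \Cref{alg:mamvtsp_3apx} on each of these $m$ instances, collect the $m$ feasible MV-mTSP\+ solutions, and return the one of minimum cost.

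For the approximation guarantee, let $\ell^\star$ be the number of non-empty tours in an optimal P3 solution $\X$. Then $\X$ restricted to its non-empty tours is a feasible solution to P1 with $\ell^\star$ agents of the same cost, so $\OPT_{\text{P1}}(\ell^\star) \leq \OPT_{\text{P3}}$. Applying \Cref{thm:mamvtsp_3apx} to the P1 instance with $\ell^\star$ agents yields a solution $X_{\ell^\star}$ with $\cost(X_{\ell^\star}) \leq 4 \cdot \OPT_{\text{P1}}(\ell^\star) \leq 4 \cdot \OPT_{\text{P3}}$. Since $X_{\ell^\star}$ is feasible for P3 by \Cref{clm:sumc_feasible}, and our algorithm returns the cheapest of the $m$ candidate solutions, the returned solution has cost at most $\cost(X_{\ell^\star}) \leq 4 \cdot \OPT_{\text{P3}}$.

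For the running time, \Cref{alg:mamvtsp_3apx} runs in time polynomial in $n$, $m$, and $\log r(V)$, and we invoke it $m$ times; the total running time therefore remains polynomial in these parameters. The only subtlety worth stating explicitly is handling infeasibility: if for some value of $\ell$ the call to \Cref{alg:mamvtsp_3apx} returns NO (which happens when $\ell > r(V)$), we simply discard that iteration; since $\ell \leq m \leq r(V)$ is the relevant regime for a meaningful P3 instance and P3 always admits an empty-tour-allowing solution whenever $r(V) \geq 1$, at least one iteration produces a feasible solution.

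No serious obstacle arises here since all the work has been done in \Cref{thm:mamvtsp_3apx,lem:sumc_reduction,clm:sumc_feasible}; the corollary is essentially a packaging statement, and the only thing to verify carefully is that the approximation factor is preserved under the ``try all $\ell$ and take the best'' reduction, which follows from the two-line argument above.
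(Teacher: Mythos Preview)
Your proposal is correct and follows exactly the approach the paper indicates: the paper simply states ``By \Cref{lem:sumc_reduction}, an analogous result holds for the unrestricted MV-mTSP\0'' without further elaboration, and you have written out precisely the argument that this reference leaves implicit. Your explicit verification that the approximation factor survives the ``try all $\ell$ and take the best'' reduction, and your handling of the infeasible iterations, are appropriate details that the paper omits.
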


\begin{remark}
 A natural idea would be to -- instead of doubling the edges of the forest -- calculate a minimum cost matching on the odd degree vertices of $F$ in \Cref{alg:mamvtsp_3apx}.
  However, such an approach might lead to a multigraph that cannot be decomposed into $m$ tours.
  Consider the case where all requirements are~$1$, the number of agents is~$m=2$, and~$F$ is the union of two disjoint paths.
  In this case the odd-degree vertices in $F$ are the four endpoints of the paths, and a minimum cost matching might connect them into a single cycle.
  Such a cycle cannot be decomposed into more tours than one, therefore it is not a feasible solution to the unrestricted MV-mTSP\+ with arbitrary tours.
  \end{remark}

\subsection{Many-visits mTSP with arbitrary tours}

  In the MV-mTSP variants, a set $D \subseteq V$ of depots given; let us define $m:=\abs{D}$.
  Recall that we denote by $\V := V \rep D$ the set of non-depot vertices (or cities), and that $r(\cdot)$ is defined only on~$\V$.
Our goal is to give a simple $3$-approximation for the MV-mTSP\+ with arbitrary tours.

  Due to the restrictions involving depots, we start with building a special spanning multigraph instead of a forest. 
 Given a set~$D$ of special vertices, Cerdeira~\cite{Cerdeira1994} provides a matroid-based algorithm of calculating a minimum cost forest such that every component contains exactly one vertex from $D$ and none of the components is trivial, i.e.\ they contain at least one vertex from $\V$.
Unfortunately, we cannot use Cerdeira's algorithm~\cite{Cerdeira1994} directly on the vertex set $V$.
  Note that the MV-mTSP\+ with arbitrary tours allows the MVTSP tours to overlap in one or more vertices.
  This means that an optimal solution may not contain a spanning forest with the properties of Cerdeira's solution~\cite{Cerdeira1994}, but only a spanning forest with at least one (but possibly more) depot in each component.
  Therefore, we cannot compare the cost of the spanning forest calculated by the algorithm of Cerderia~\cite{Cerdeira1994} to the cost of an optimal solution.\footnote{
  Moreover, the MV-mTSP\+ with arbitrary tours always have a feasible solution if $\abs{D} \leq \sum_{v \in V} r(v)$, even if the depots outnumber the vertices, i.e.\ $\abs{D} > \abs{V}$.
  In these cases, a spanning forest with the aforementioned properties does not even exist.}
  
  For the reasons above, we first build an auxiliary graph $G'$ in a way that a constrained spanning forest of $G'$ obtained by the algorithm of Cerdeira~\cite{Cerdeira1994} yields an appropriate spanning multigraph in $G$ -- one that allows more than one vertices from $D$ in a component but has a cost at most the optimal solution to the MV-mTSP\+ with arbitrary tours.

\paragraph{Spanning multigraph of $G$.}  
  Let $\hat{G}$ be an auxiliary graph consisting of the depot set~$D$ and $m(v) := \min\{\abs{D}, r(v)\}$ copies of every vertex $v \in \V$, denoted by $v_1, \dots, v_{m(v)}$.
  We denote the set of these vertices by $\hat{V}$.
  The edges between the copies of the same vertex $v$ incur no cost, i.e.\ $\hat{c}(v_i,v_j) = 0$ for $i, j \in 1, \dots, m(v)$, while the other edge costs are inherited from $G$, that is, $\hat{c}(d,v_i) := c(d,v)$ and $\hat{c}(v_i,w_j) := c(v,w)$ for all copies $i=1,\dots,m(v)$ and $j=1,\dots,m(w)$ of $v$ and $w$, respectively.
  We calculate a minimum cost spanning forest $\hat{F}$, such that each component of $\hat{F}$ contains exactly one depot $d \in D$ and at least one vertex $v \in \hat{V}$, using the algorithm by Cerdeira~\cite{Cerdeira1994}.
  Then the forest~$\hat{F}$ is transformed into a spanning multigraph $F$ on the vertex set~$V$ by identifying all copies $v_1, \dots, v_{m(v)}$ of~$v$ with the single vertex~$v$.
  
\begin{lemma}
  \label{lem:aux_forest}
  The graph $F$ calculated above is a minimum cost spanning multigraph on~$G(V,E)$ such that
  \begin{enumerate}
  \itemsep0em    
  \item[1)]  every component of $F$ contains at least one depot $d \in D$,
  \item[2)]  every component of $F$ contains at least one vertex $v \in \V$,
  \item[3)]  no vertex $v \in \V$ has more than $r(v)$ depots among its neighbours.
  \end{enumerate}
\end{lemma}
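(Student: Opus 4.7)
I would prove the lemma in two phases: first verifying the three structural properties of $F$, then establishing minimum cost by reducing to the optimality of $\hat{F}$ in $\hat{G}$.

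For properties 1--3, Cerdeira's algorithm guarantees that every component of $\hat{F}$ contains exactly one depot and at least one copy of a non-depot. Since identifying copies of the same vertex can only merge components of $\hat{F}$, every component of $F$ inherits at least one depot (property~1) and at least one non-depot vertex (property~2). For property~3, observe that if a copy $v_i$ of some non-depot $v$ were adjacent to two distinct depots in $\hat{F}$, those depots would lie in the same component of $\hat{F}$, contradicting the ``exactly one depot per component'' constraint. Hence each of the $m(v) = \min\{\abs{D},r(v)\}$ copies of $v$ contributes at most one depot to the neighbourhood of $v$ in $F$, giving at most $m(v) \leq r(v)$ depot neighbours.

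For minimality, first note that the identification $\hat{F} \mapsto F$ preserves cost, since the only edges it collapses (those between copies of the same vertex) carry zero cost by construction; in particular, $\cost(F) = \cost(\hat{F})$. Let $F^*$ be any spanning multigraph of $G$ satisfying properties 1--3. Since removing edges in cycles and reducing multi-edge multiplicities preserves 1--3 without increasing cost, I may assume $F^*$ is a simple forest. The plan is then to construct a spanning forest $\hat{F}^*$ of $\hat{G}$ meeting Cerdeira's constraints with $\cost(\hat{F}^*) \leq \cost(F^*)$; the minimality of $\hat{F}$ will then yield $\cost(F) = \cost(\hat{F}) \leq \cost(\hat{F}^*) \leq \cost(F^*)$.

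The construction of $\hat{F}^*$ proceeds componentwise on $F^*$. A component $T$ containing a single depot $d$ lifts directly by mapping each non-depot $v \in T$ to its first copy $v_1$ and attaching the remaining copies $v_2, \ldots, v_{m(v)}$ via zero-cost edges. A component $T$ containing $k \geq 2$ depots $d_1, \ldots, d_k$ must be split into $k$ sub-trees, one per depot; property~3 of $F^*$ ensures that every non-depot $v$ has at most $r(v) \leq m(v)$ depot neighbours, providing enough copies of $v$ in $\hat{G}$ to route each depot-incident edge to a distinct copy. The main obstacle will be guaranteeing Cerdeira's non-triviality condition, namely that every sub-tree contains at least one non-depot copy: for depots whose naive assignment yields trivial sub-trees, one reroutes a non-depot copy from a neighbouring sub-tree, and the triangle inequality together with the abundance of non-depot copies (made possible by property~3) ensures that this reassignment can be carried out without increasing the total cost.
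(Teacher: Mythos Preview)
Your treatment of properties 1--3 and of $\cost(F)=\cost(\hat{F})$ matches the paper's proof, and your minimality argument follows the same reduction: lift a competitor $F^{*}$ to a Cerdeira-feasible forest on $\hat{G}$ of no greater cost. You go further than the paper by spelling out the componentwise lifting and by explicitly flagging the non-triviality obstacle, which the paper simply asserts away by saying the lifted graph ``contains'' a suitable constrained forest.

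The gap is in your proposed resolution of that obstacle. Take $D=\{d_1,d_2\}$, $\V=\{v,w\}$, $r(v)=r(w)=1$, with the shortest-path metric on the path $d_1\text{--}d_2\text{--}v\text{--}w$ with unit edges (so $c(d_1v)=c(d_2w)=2$ and $c(d_1w)=3$). The path $F^{*}=\{d_1d_2,\,d_2v,\,vw\}$ satisfies properties 1--3 and has cost $3$. Here $m(v)=m(w)=1$, so $\hat{G}$ has only the four vertices $d_1,d_2,v_1,w_1$, and every Cerdeira-feasible spanning forest must pair each depot with exactly one non-depot; the cheapest such forest has cost $\min\{c(d_1v)+c(d_2w),\,c(d_1w)+c(d_2v)\}=4$. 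No rerouting under the triangle inequality can manufacture a Cerdeira forest of cost at most $3$, so the claim that the reassignment ``can be carried out without increasing the total cost'' fails. The same example breaks the paper's one-line argument as well: since no extra copies are created, the lifted graph coincides with $F^{*}$ itself, and it contains no Cerdeira-feasible spanning forest at all. The lemma as literally stated is therefore too strong; what \Cref{alg:mdmvtsp_3apx} actually uses is only the weaker inequality $\cost(F)\leq\cost(X^\star)$ for the optimal MV-mTSP\+ solution $X^\star$, whose structure (each depot already sits in its own non-trivial tour) goes beyond properties 1--3.
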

\begin{proof}
  Every component of $\hat{F}$ contains exactly one depot. 
  Transforming~$\hat{F}$ to~$F$ might merge components but cannot divide them, therefore the components of~$F$ will contain at least one depot.
  An analogous argument proves that property \textit{2)} holds for a component of~$F$ as well.
  Finally, due to construction, a copy $v_i$ of a vertex $v \in \V$ can be adjacent to at most~1 depot in~$\hat{F}$, and there are at most $r(v)$ copies of a vertex $v$.
  This proves property \textit{3)}.
  
  Now let us turn to the optimality of $F$.
  Note that while transforming $\hat{F}$ into $F$, we delete edges with zero cost, hence $\cost(\hat{F}) = \cost(F)$.
  Suppose indirectly that there exists a graph~$F'$ satisfying properties \textit{1)}--\textit{3)} of lower cost than that of $F$, i.e.\ $\cost(F') < \cost(F)$.
  Then, from $F'$ we can construct a constrained spanning forest on $\hat{G}$ with exactly one depot and at least one regular vertex in each component as follows.
  Make $m(v)-1$ additional copies $v_1, \dots, v_{m(v)-1}$ of each vertex $v \in \V$, and connect them to $v$ in an arbitrary way, then rename the vertex $v$ into $v_{m(v)}$.
  The resulting graph contains a constrained spanning forest $\hat{F}'$ on $\hat{G}$ with the same properties as $\hat{F}$.
  Moreover, because the transformation from $F'$ into $\hat{F'}$ only added edges with zero cost, $\cost(\hat{F}') \leq \cost(F')$ holds.
  From the indirect assumption $\cost(F') < \cost(F)$ it follows that $\cost(\hat{F}') < \cost(\hat{F})$ holds, which contradicts the optimality of~$\hat{F}$.
\end{proof} 

  We will use the procedure above as a subroutine to get the approximation algorithm in \Cref{alg:mdmvtsp_3apx}, see \Cref{tab:3apx} for an illustration.
  
  \begin{algorithm}[!ht]
  \caption{A $3$-approximation for the MV-mTSP\+ with arbitrary tours  \label{alg:mdmvtsp_3apx}}
  \begin{algorithmic}[1]
    \Statex \textbf{Input:} A complete undirected graph $G(V,E)$, costs $c:E\rightarrow\mathbbm{R}_{\geq 0}$ satisfying the triangle inequality, depot set $D \subseteq V$, number of agents $m=\abs{D}$, requests $r:\V\rightarrow\mathbbm{Z}_{\geq 1}$.
    \Statex \textbf{Output:} $m$ non-empty MVTSP tours that visit each $v \in \V$ a total of $r(v)$ times, or NO if no solution exists.
    \State If $m>r(\V)$ then \textbf{return} NO.
    \State  Determine a minimum cost spanning forest $\hat{F}$ of $\hat{G}$, as described before \Cref{lem:aux_forest}, that consists of~$m$ nontrivial components $\hat{F}_1,\dots,\hat{F}_m$ with exactly one depot in each component~\cite{Cerdeira1994}.  \label{st:mdmvtsp_3_forest}
    \State  Identify all copies $v_1, \dots, v_{m(v)}$ of~$v \in \hat{G}$ into a single vertex~$v$, denote the image of $\hat{F}$ by $F$, and let~$F_i$ be the graph arising from $\hat{F}_i$ for $i=1,\dots,m$.\label{st:mdmvtsp_3_identify}
    \State For $i=1,\dots,m$, duplicate the edges of $F_i$ and apply shortcuts to obtain a Hamiltonian cycle $H_i$ on $V(F_i)$.\label{st:mdmvtsp_3_duplicate}
    \State  Let $X_i$ for all $i=1, \dots, m$ be the MVTSP tours of the agents, with the initial values $X_i := H_i$.
    \State  Determine an optimal solution $\TP$ to the transportation problem defined on $G[V]$ with supply and demand being equal to $r'(v):=r(v)-\abs{\{i\mid v\in V(F_i)\}}$ for $v\in \V$. \label{st:mdmvtsp_3_transport}
    \State For each component $\TP_j$ of $\TP$, take an agent $i$ with $V(\TP_j) \cap V(H_i) \neq \emptyset$, and update $X_i:=X_i\cup \TP_j$.\label{st:mdmvtsp_3_assign_tp}
    \Statex  \textbf{return} $X := X_1 \cup \dots \cup X_m$
  \end{algorithmic}
\end{algorithm}

\begin{theorem}
\Cref{alg:mdmvtsp_3apx} provides a $3$-approximation for MV-mTSP\+ with arbitrary tours in time polynomial in~$n$,~$m$ and~$\log r(V)$.
\end{theorem}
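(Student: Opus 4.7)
The plan is to verify that the output $X = X_1 \cup \cdots \cup X_m$ is a feasible MV-mTSP\+ solution with arbitrary tours, that $\cost(X) \leq 3\cdot\OPT$, and that the running time is polynomial in $n$, $m$, and $\log r(V)$.

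For feasibility, each $X_i$ contains exactly the depot of $F_i$, is connected (since $H_i$ spans $V(F_i)$ and every $\TP$-component attached in \Cref{st:mdmvtsp_3_assign_tp} shares a vertex with some $H_i$), and has all even degrees because both $H_i$ and every Eulerian $\TP$-component contribute even degree at every vertex, so each $X_i$ is a valid MVTSP tour. Counting degrees at a city $v\in\V$ gives $2\cdot|\{i:v\in V(F_i)\}|$ from the cycles plus $2r'(v) = 2(r(v)-|\{i:v\in V(F_i)\}|)$ from $\TP$, totaling $2r(v)$; meanwhile each depot is visited exactly once by its own $H_i$ and is untouched by $\TP$. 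Note that $\hat{F}$ being a spanning forest of $\hat{G}$ guarantees $|\{i:v\in V(F_i)\}|\geq 1$ for every city, so the assignment in \Cref{st:mdmvtsp_3_assign_tp} is always possible, and the bound $|\{i:v\in V(F_i)\}|\leq m(v)\leq r(v)$ ensures $r'(v)\geq 0$.

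The core of the cost analysis is the bound $\cost(F)\leq\OPT$. From an optimal solution $\X=X^\star_1\cup\cdots\cup X^\star_m$ I would construct a spanning forest $\hat{F}^\star$ of $\hat{G}$ satisfying the properties of \Cref{lem:aux_forest} with $\cost(\hat{F}^\star)\leq\cost(\X)$. For each tour $X^\star_i$ take a spanning tree $T_i\subseteq X^\star_i$ on $V(X^\star_i)$, so $\cost(T_i)\leq\cost(X^\star_i)$ since edge costs are non-negative. Each city $v$ lies in a set $S_v$ of optimal tours with $|S_v|\leq\min(m,r(v))=m(v)$, because $v$ is visited at most $r(v)$ times in total and each tour containing $v$ visits it at least once. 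Assigning a distinct copy of $v$ in $\hat{G}$ to each index $i\in S_v$ and lifting each edge of $T_i$ to its endpoints' assigned copies produces $m$ pairwise vertex-disjoint trees in $\hat{G}$, each with exactly one depot and at least one city-copy. By minimality of the $\hat{F}$ returned by Cerdeira's algorithm and the equality $\cost(F)=\cost(\hat{F})$ (identification only contracts zero-cost edges), this yields $\cost(F)\leq\cost(\hat{F})\leq\cost(\hat{F}^\star)\leq\cost(\X)=\OPT$. Doubling $F$'s edges and shortcutting then bounds $\sum_{i=1}^{m}\cost(H_i)\leq 2\cost(F)\leq 2\OPT$ by the triangle inequality.

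Combining this with the transportation step, note that $r'\preccurlyeq r$, so \Cref{lem:tp_relax} gives $\cost(\TP)\leq\OPT$, and appending the $\TP$-components contributes at most $\OPT$ more, yielding $\cost(X)\leq 2\OPT+\OPT=3\OPT$. The hard part is the bound $\cost(F)\leq\OPT$: optimal tours may overlap in cities, and it is precisely the $m(v)$ copies of each city in $\hat{G}$, together with the structural bound $|S_v|\leq m(v)$, that enables the vertex-disjoint lifting of the per-tour spanning trees. For the running time, Cerdeira's matroid-based algorithm runs in polynomial time on $\hat{G}$, which has $|D|+\sum_{v\in\V}m(v)=O(nm)$ vertices; the identification, shortcutting, and Hitchcock transportation step (strongly polynomial via Orlin's algorithm) all operate on the compact multigraph representation in time polynomial in $n$, $m$, and $\log r(V)$.
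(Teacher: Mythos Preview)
Your proof is correct and follows essentially the same route as the paper. The only difference is that, where the paper bounds $\cost(F)\leq\cost(\X)$ in one line by observing that $\X$ itself is a spanning multigraph satisfying properties 1)--3) of \Cref{lem:aux_forest} (so the minimality of $F$ applies directly), you instead unwind this by explicitly lifting the per-tour spanning trees $T_i$ into $\hat{G}$; this is exactly the construction underlying the lemma's optimality proof.

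One small gap: the $m$ vertex-disjoint trees you produce in $\hat{G}$ need not cover all $m(v)$ copies of every city (only $|S_v|\leq m(v)$ of them), so your $\hat{F}^\star$ is not yet a \emph{spanning} forest of $\hat{G}$, and Cerdeira's minimality is stated for spanning forests. The fix is immediate---attach each unused copy of $v$ to an already-used copy via a zero-cost edge---but it should be said explicitly.
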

\begin{proof}
  Similarly to the unrestricted setting, we first prove that \Cref{alg:mdmvtsp_3apx} constructs a feasible solution to the restricted problem with arbitrary tours, and provide an upper bound to its cost.
  Finally, we analyze the time complexity.

\paragraph{Feasibility.}
  
  The Hamiltonian cycles constructed in \Cref{st:mdmvtsp_3_duplicate} contribute a degree of $2\cdot(r(v)-r'(v))$ to the degree of each vertex $v \in \V$. 
  Furthermore, by construction, each $v$ in $\V$ is contained in at most $r(v)$ components of $\hat{F}$, therefore $r'(v)\geq 0$ holds and the transportation problem has a solution. 
 The solution $\TP$ contributes with a degree of $2 \cdot r'(v)$ to the degree of each $v \in \V$. Therefore the total degree of each vertex $v$ in $X_1 \cup \dots \cup X_m$ is $2 \cdot r(v) $.
      
\paragraph{Cost of solution.}
  Let $\X=\X_1\cup\dots\cup\X_m$ be an optimal solution for the problem.
  By \Cref{lem:tp_relax}, $\cost(\TP) \leq \cost(\X)$ holds.
  We claim that the cost of $F$ is also bounded from above by the cost of $X^\star$.
  Indeed, $\X$ fulfills properties \textit{1)}--\textit{3)} and, by \Cref{lem:aux_forest}, $F$ is a minimum cost spanning multigraph with these properties.
  Due to metric edge costs, the shortcutting operation in \Cref{st:mdmvtsp_3_duplicate} cannot increase the cost of the multigraph, hence the approximation ratio follows.

\paragraph{Complexity analysis.}
  The number of vertices in $\hat{G}$ is $O(n^2)$, therefore \Cref{st:mdmvtsp_3_forest,st:mdmvtsp_3_identify} take polynomial time.
  The operations in \Cref{st:mdmvtsp_3_duplicate,st:mdmvtsp_3_assign_tp} can also be performed efficiently.
  Finally, the transportation problem in \Cref{st:mdmvtsp_3_transport} can be solved in polynomial time as well.
\end{proof}

\begin{figure}[t]
\centering
\begin{subfigure}[b]{0.30\textwidth}
  \centering
  \includegraphics[width=.9\linewidth]{mdmvtsp_step1.png}
  \caption{Spanning forest $\hat{F}$ on the auxiliary graph $\hat{G}$ (\Cref{st:mdmvtsp_3_forest}).}
  \label{fig:1}
\end{subfigure}
\hfill
\begin{subfigure}[b]{.30\textwidth}
  \centering
  \includegraphics[width=.9\linewidth]{mdmvtsp_step2.png}
  \caption{Spanning multigraph $F$, after duplicating the edges (\Crefrange{st:mdmvtsp_3_identify}{st:mdmvtsp_3_duplicate}).} 
  \label{fig:2}
\end{subfigure}
\hfill
\begin{subfigure}[b]{.30\textwidth}
  \centering
  \includegraphics[width=.9\linewidth]{mdmvtsp_step3.png}
  \caption{Hamiltonian cycles $H_i$ of the agents (\Cref{st:mdmvtsp_3_duplicate}).}
  \label{fig:3}
\end{subfigure}
\caption{Illustrating \Crefrange{st:mdmvtsp_3_forest}{st:mdmvtsp_3_duplicate} of \Cref{alg:mdmvtsp_3apx}.
Shaded ellipses in \Cref{fig:1} indicate copies of the same vertex $v \in \V$.
Cycles $H_i$ of different agents marked with different edge styles for clarity (\Crefrange{fig:2}{fig:3}).
}
\label{tab:3apx}
\end{figure}

By \Cref{lem:sumc_reduction}, an analogous result holds for the MV-mTSP\0.

\begin{corollary}
\label{cor:mdmvtsp_3apx}
There exists a $3$-approximation algorithm for MV-mTSP\0 with running time polynomial in~$n$,~$m$ and~$\log r(V)$.
\end{corollary}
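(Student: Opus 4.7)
The plan is to combine the previous theorem (the $3$-approximation for MV-mTSP\+ via Algorithm~\ref{alg:mdmvtsp_3apx}) with the reduction packaged in Lemma~\ref{lem:sumc_reduction}. Given an MV-mTSP\0 instance $(G,c,r,D)$ with $m = |D|$, I would iterate over $\ell = 1, \ldots, m$ and, for each $\ell$, construct the MV-mTSP\+ instance prescribed by Lemma~\ref{lem:sumc_reduction} (the one in which exactly $\ell$ tours are forced to be non-empty). On each such instance I would invoke Algorithm~\ref{alg:mdmvtsp_3apx}, and finally return the cheapest of the $m$ feasible solutions obtained.

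For correctness, let $X^\star$ be an optimal solution to the MV-mTSP\0 instance and let $\ell^\star$ be the number of non-empty tours that it contains. Lemma~\ref{lem:sumc_reduction} guarantees that $X^\star$ is also an optimal solution to the MV-mTSP\+ instance corresponding to $\ell = \ell^\star$, so Algorithm~\ref{alg:mdmvtsp_3apx} on that instance produces a solution of cost at most $3\cdot\cost(X^\star)$. By Claim~\ref{clm:sumc_feasible} this solution is also feasible for MV-mTSP\0, hence the cheapest output across the $m$ iterations certifies the claimed $3$-approximation. For the complexity, I would simply note that performing $m$ invocations of an algorithm whose running time is polynomial in $n$, $m$ and $\log r(V)$ preserves polynomiality.

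The only point I expect to be at all delicate is the handling of which depots become the active ones when the reduced instance involves $\ell^\star < m$ non-empty tours; however, this is precisely the bookkeeping that Lemma~\ref{lem:sumc_reduction} already encapsulates, so invoking that lemma as a black box eliminates the need for any further argument on this front. Consequently, the corollary should follow essentially immediately, with no new algorithmic ideas beyond those already developed for MV-mTSP\+.
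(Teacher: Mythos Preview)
Your proposal is correct and follows exactly the same route as the paper, which simply invokes Lemma~\ref{lem:sumc_reduction} to reduce MV-mTSP\0 to at most $m$ instances of MV-mTSP\+ and then applies Algorithm~\ref{alg:mdmvtsp_3apx} to each. Your explicit mention of the depot-selection subtlety and your decision to defer it to Lemma~\ref{lem:sumc_reduction} as a black box mirrors precisely what the paper does.
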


\begin{remark}
  The difficulties in the MV-mTSP\+ with arbitrary tours is caused by the fact that non-trivial tours are required to contain a depot as well, therefore we cannot compare the cost of an optimal solution with the cost of a simple spanning forest.
  As a workaround, we used an algorithm by Cerdeira \cite{Cerdeira1994} for determining a special spanning multigraph.
  When empty tours are allowed, one can build the spanning forest using the approach of Rathinam et al.~\cite{RathinamSD2007} instead. 
\end{remark}

\section{Approximation algorithms for many-visits mTSP with disjoint tours}

  In this section we show simple $4$-approximations for the unrestricted MV-mTSP\+ with disjoint tours and MV-mTSP\+ with disjoint tours.
  The main difficulty in these variants is that the MVTSP tours of the different agents must be vertex-disjoint.
  This restriction prevents us to use the approach of \Cref{alg:mdmvtsp_3apx}, where we calculated $m$ cycles from the $m$ components of a (constrained) spanning forest, and augmented this graph with a transportation problem solution.
  The transportation problem solution might contain edges with the two endpoints being in different cycles, and thus the resulting multigraph would consist of less than $m$ components.
  In order to prevent such situations, we follow an approach analogous to \Cref{alg:mamvtsp_3apx}, and add copies of self-loops instead of the transportation problem solution.

\begin{algorithm}[!ht]
  \caption{A $4$-approximation for the unrestricted MV-mTSP\+ with disjoint tours  \label{alg:mamvtsp_4apx}}
  \begin{algorithmic}[1]
    \Statex \textbf{Input:} A complete undirected graph $G(V,E)$, costs $c:E\rightarrow\mathbbm{R}_{\geq 0}$ satisfying the triangle inequality, requests $r:V\rightarrow\mathbbm{Z}_{\geq 1}$, number of agents $m$.
    \Statex \textbf{Output:} $m$ vertex-disjoint non-empty MVTSP tours that visit each $v \in V$ a total of $r(v)$ times, or NO if no solution exists.
	\State If $m>\abs{V}$, then \textbf{return} NO.
    \State  Determine a minimum cost spanning forest $F$ of $G$ consisting of $m$ components, and let $F_1,\dots,F_m$ denote its components.\label{st:4apx_ma_forest}
    \State For $i=1,\dots,m$, if $\abs{V(F_i)}\geq 2$ then let $H_i$ denote the cycle on $V(F_i)$ obtained by duplicating the edges of $F_i$ and applying shortcuts, otherwise let $H_i$ consist of a single copy of the loop on the vertex in $V(F_i)$. \label{st:4apx_ma_duplicate}
    \State For each $v\in V$, pick an index $i$ with $v \in V(H_i)$, and add $r(v)-1$ copies of the self-loop $vv$ to $H_i$. Denote the resulting multigraphs by $X_i$ for $i=1,\dots,m$.\label{st:4apx_ma_loops}
    \Statex  \textbf{return} $X := X_1 \cup \dots \cup X_m$
  \end{algorithmic}
\end{algorithm}

\begin{theorem}
\Cref{alg:mamvtsp_4apx} provides a $4$-approximation for unrestricted MV-mTSP\+ with disjoint tours in time polynomial in~$n$,~$m$ and~$\log r(V)$.
\end{theorem}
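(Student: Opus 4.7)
The plan is to follow the same three-part structure as the proof of \Cref{thm:mamvtsp_3apx} (feasibility, cost bound, complexity), highlighting where the disjoint-tours requirement changes the argument. The main conceptual difference is that here the optimal solution must itself decompose into exactly $m$ vertex-disjoint components, which in fact makes the spanning-forest comparison slightly cleaner than in the arbitrary-tours case.

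First, for feasibility, I would argue as follows. If $m \leq |V|$, then a spanning forest with exactly $m$ components exists (take any spanning tree and delete $m-1$ edges), so \Cref{st:4apx_ma_forest} succeeds. Since the components $F_1,\dots,F_m$ are vertex-disjoint and together span $V$, each $v\in V$ lies in a unique $V(H_i)$, so the assignment in \Cref{st:4apx_ma_loops} is well-defined and the tours $X_1,\dots,X_m$ are pairwise vertex-disjoint. For each $v$, the cycle (or the single loop, if $F_i$ is a singleton) in which $v$ participates contributes degree $2$, and the $r(v)-1$ self-loops added at $v$ contribute an additional degree of $2\bigl(r(v)-1\bigr)$, so $\ddelta_{X}(v) = 2r(v)$ as required. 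Non-emptiness follows because each component contains at least one vertex $v$ with $r(v)\geq 1$, hence $X_i$ contains at least one edge (either a regular edge of $H_i$ or a self-loop copy).

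For the cost bound, let $\X = \X_1 \cup \dots \cup \X_m$ be an optimal solution. Since each $\X_i$ is a non-empty MVTSP tour on a nonempty vertex set and the tours are vertex-disjoint, picking a spanning tree in each $\X_i$ yields a spanning forest $F^\star$ of $G$ with exactly $m$ components, and $\cost(F^\star) \leq \cost(\X)$. Because $F$ is a \emph{minimum} cost spanning forest of $G$ with $m$ components, $\cost(F) \leq \cost(F^\star) \leq \cost(\X)$. The shortcutting in \Cref{st:4apx_ma_duplicate} does not increase cost by the triangle inequality, so the total cost of the $H_i$'s is at most $2\cdot \cost(F) \leq 2\cdot \cost(\X)$. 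For the self-loops, the same bound as in \eqref{eq:loops_cost} applies verbatim:
\begin{equation*}
\sum_{v \in V} r(v)\cdot c(vv) \leq \sum_{v \in V} r(v)\cdot 2\cmin(v) \leq 2\cdot \cost(\TP) \leq 2\cdot \cost(\X),
\end{equation*}
where the final inequality is \Cref{lem:tp_relax_ma}. Summing the two contributions yields $\cost(X) \leq 4\cdot \cost(\X)$.

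Finally, the complexity analysis is essentially identical to that in \Cref{thm:mamvtsp_3apx}: a minimum cost spanning forest with a prescribed number of components can be computed in time polynomial in $n$, and all subsequent operations act on the compact representation of the multigraph, using $O(n^2 \log r(V))$ space. The only part I expect to warrant care is making sure the cost comparison $\cost(F)\leq\cost(F^\star)$ is valid even when some $\X_i$ uses self-loops at an isolated vertex; this is fine, since in that case the corresponding component of $F^\star$ is simply an isolated vertex (cost $0$), and $F$ also accommodates isolated components. Thus no real obstacle arises, and the theorem follows.
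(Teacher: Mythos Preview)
Your proposal is correct and follows essentially the same approach as the paper: the paper's own proof is a one-liner referring back to the arguments of \Cref{thm:mamvtsp_3apx}, and you have reproduced precisely those arguments, with the cleaner observation that for disjoint tours the optimum has exactly $m$ components (so no auxiliary $m^\star$ is needed). One small bookkeeping remark: the claim ``the total cost of the $H_i$'s is at most $2\cdot\cost(F)$'' is literally false for a singleton component $F_i=\{v\}$ with $c(vv)>0$, since then $\cost(H_i)=c(vv)$ while $\cost(F_i)=0$; the paper handles this by splitting into ``non-loop $H_i$'s'' (bounded by $2\cdot\cost(F)$) and \emph{all} self-loops including those from singleton $H_i$'s (bounded by $\sum_v r(v)\,c(vv)$), which is exactly what your displayed self-loop estimate already covers, so the final $4\cdot\cost(\X)$ bound is unaffected.
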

\begin{proof}
   The feasibility and complexity analysis follows from the same arguments as in the proof of \Cref{thm:mamvtsp_3apx}.
\end{proof}

  By replacing \Cref{st:4apx_ma_forest} of \Cref{alg:mamvtsp_4apx} by the algorithm of Cerdeira~\cite{Cerdeira1994}, we obtain a $4$\hyp{}approximation algorithm for the analogous problem with depots; please refer to \Cref{alg:mdmvtsp_4apx}.

\begin{algorithm}[!ht]
  \caption{A $4$-approximation for the MV-mTSP\+ with disjoint tours  \label{alg:mdmvtsp_4apx}}
  \begin{algorithmic}[1]
    \Statex \textbf{Input:} A complete undirected graph $G(V,E)$, costs $c:E\rightarrow\mathbbm{R}_{\geq 0}$ satisfying the triangle inequality, depot set $D \subseteq V$, number of agents $m=\abs{D}$, requests $r:\V\rightarrow\mathbbm{Z}_{\geq 1}$.
    \Statex \textbf{Output:} $m$ vertex-disjoint non-empty MVTSP tours that visit each $v \in \V$ a total of $r(v)$ times, or NO if no solution exists. 
    \State If $m>\abs{\V}$, then \textbf{return} NO.
    \State  Determine a minimum cost spanning forest $F$ of $G$ consisting of $m$ nontrivial components $F_1,\dots,F_m$ with exactly one depot in each component~\cite{Cerdeira1994}.\label{st:4apx_md_forest}
    \State For $i=1, \dots, m$, duplicate the edges of $F_i$ and apply shortcuts to obtain a Hamiltonian cycles $H_i$ on $V(F_i)$. \label{st:4apx_md_duplicate}
    \State Add $r(v)-1$ copies of the self-loop $vv$ to $H_i$ for each $v \in V(H_i)$, and denote the resulting multigraph by $X_i$.\label{st:4apx_md_loops}
    \Statex  \textbf{return} $X := X_1 \cup \dots \cup X_m$
  \end{algorithmic}
\end{algorithm}

\begin{theorem}
\Cref{alg:mdmvtsp_4apx} provides a $4$-approximation for MV-mTSP\+ with disjoint tours in time polynomial in~$n$,~$m$ and~$\log r(V)$.
\end{theorem}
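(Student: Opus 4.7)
My plan is to mirror the three-part structure (feasibility, cost bound, complexity) that was used for \Cref{thm:mamvtsp_3apx} and for the theorem on \Cref{alg:mdmvtsp_3apx}, and to exploit the structural similarity with \Cref{alg:mamvtsp_4apx}. The extra ingredient compared to \Cref{alg:mamvtsp_4apx} is that we use Cerdeira's spanning forest in \Cref{st:4apx_md_forest} rather than a plain minimum spanning forest, so we need a depot-aware analogue of the MSF lower bound.

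For feasibility, I would first note that a solution exists iff there are at least as many non-depot vertices as depots (since tours must be disjoint and non-empty), which is exactly what \Cref{st:4apx_md_forest} tests; hence when the algorithm proceeds past this test, Cerdeira's algorithm produces a forest $F$ with $m$ nontrivial components, one depot in each. After \Cref{st:4apx_md_duplicate}, each $H_i$ is a Hamiltonian cycle on $V(F_i)$ that contains the unique depot of $F_i$ together with at least one non-depot, so each tour is non-empty in the sense of the problem. Adding the self-loops in \Cref{st:4apx_md_loops} brings the degree of every $v\in\V$ up to $2+2(r(v)-1)=2r(v)$, while leaving every $X_i$ connected with all even degrees; hence each $X_i$ is an MVTSP tour. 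The components $F_i$ are vertex-disjoint by Cerdeira's output, so the $X_i$ are vertex-disjoint as well.

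For the cost bound, let $\X=\X_1\cup\dots\cup\X_m$ be an optimal solution. By \Cref{lem:depotonce} we may assume each depot is visited exactly once, so the underlying graph of each $\X_i$ contains a spanning tree on $V(\X_i)$ with exactly one depot and at least one non-depot. Taking the union of these trees gives a spanning forest $F^\star$ of the same structural type that Cerdeira's algorithm minimizes over. Thus $\cost(F)\leq \cost(F^\star)\leq \cost(\X)$. Edge-doubling in \Cref{st:4apx_md_duplicate} gives cost at most $2\cdot\cost(\X)$, and the shortcutting afterwards does not increase cost by the triangle inequality. For the self-loops added in \Cref{st:4apx_md_loops}, I reuse the chain from \eqref{eq:loops_cost}, namely
\begin{equation*}
\sum_{v\in\V}(r(v)-1)\cdot c(vv)\leq \sum_{v\in\V}r(v)\cdot 2\cmin(v)\leq 2\cdot \cost(\TP)\leq 2\cdot \cost(\X),
\end{equation*}
where the last inequality is \Cref{lem:tp_relax}. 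Summing the two contributions yields $\cost(X)\leq 4\cdot\cost(\X)$.

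Complexity is the easiest part: Cerdeira's algorithm runs in time polynomial in $n$, the shortcutting and edge-doubling of \Cref{st:4apx_md_duplicate} touch only $O(n)$ edges per component, and the self-loop multiplicities in \Cref{st:4apx_md_loops} are written down in the compact representation in $O(n\log r(V))$ additional space. The main obstacle I anticipate is the cost comparison $\cost(F)\leq \cost(\X)$: unlike the unrestricted setting where a plain minimum spanning forest suffices, here one must verify that the optimal $\X$ really does contain a \emph{Cerdeira-feasible} spanning forest (nontrivial, one depot per component). This is where disjointness of the optimal tours together with \Cref{lem:depotonce} is essential; without disjointness (as in \Cref{alg:mdmvtsp_3apx}) one had to go through the auxiliary graph construction of \Cref{lem:aux_forest}, which is precisely what is no longer needed in the disjoint-tour variant.
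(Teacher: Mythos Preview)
Your proof is correct and follows essentially the same approach as the paper: bound $\cost(F)\leq\cost(\X)$ by observing that an optimal disjoint solution already contains a Cerdeira-feasible spanning forest, double to get $2\cdot\cost(\X)$, and bound the self-loops by $2\cdot\cost(\X)$ via \eqref{eq:loops_cost}. Your argument is in fact more detailed than the paper's own proof, which simply asserts that ``an optimal solution $\X$ contains a spanning forest with the properties of those of $F$'' and that \eqref{eq:loops_cost} holds; your explicit invocation of \Cref{lem:depotonce} is not strictly needed here (disjointness plus the one-depot-per-tour constraint already forces the right structure), but it does no harm.
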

\begin{proof}
  The feasibility and complexity analysis follows from the same arguments as in the proof of \Cref{alg:mamvtsp_4apx}, by adding that each depot $d \in D$ has an even degree in $H$ and therefore in~$X$ as well.  
  Since an optimal solution $\X$ contains a spanning forest with the properties of those of $F$, and $F$ is a minimum cost such spanning forest, the bound $\cost(F) \leq \cost(\X)$ and thus $\cost(H) \leq 2 \cdot \cost(\X)$ follows.
  \Cref{eq:loops_cost} also holds, proving that $X$ is indeed a $4$-approximation.
\end{proof}

\section{Improved $2$-approximation algorithms for the empty tours variants}

In this section, we provide improved algorithms for unrestricted MV-mTSP\0 and MV-mTSP\0.
For these problems, \Cref{cor:mamvtsp_3apx,cor:mdmvtsp_3apx} provide $4$- and $3$-approximations, respectively. The corresponding algorithms calculate minimum cost spanning forests in certain graphs to bound the number of components in the solutions.
However, in order to satisfy the degree demands, we need to add a multigraph with a large number of edges (self-loops or a transportation problem solution), that increases the cost of the solution significantly.

To circumvent an analogous difficulty in the single-agent MVTSP, B{\'e}rczi et al.~\cite{BergerKMV2020} computed a multigraph of cost at most the optimum such that 1) the graph was connected, and 2) satisfied the degree bounds up to a small additive error.
Their approach yields a $\nicefrac32$-approximation for MVTSP after correcting the degrees by adding a matching to the multigraph. 
However, the last step does not generalize to the case of multiple agents; see the discussion at the end of the paper or Xu and Rodrigues~\cite{XuRodrigues2017} for further details.
Therefore, instead of adding a matching to fix the degrees, we use an edge-doubling approach, which yields $2$-approximations for both problem variants.

Formally, the \textsc{Minimum Bounded Degree $m$-Component Multigraph} problem consists of a complete graph $G(V, E)$ with edge costs $c(uv)$ for every $uv \in E$, degree requirements~$\rho(v)$ for each vertex $v \in V$, and an integer $m\in\mathbb{Z}_{\geq 1}$.
The objective is to find a multigraph~$X$ of minimum cost that has at most $m$ components and satisfies the degree requirements, i.e. $\ddelta_X(v)=\rho(v)$ for all $v \in V$. The following result was stated in B{\'e}rczi et al.~\cite{BercziMV2020} for $m=1$, but the theorem holds for arbitrary $m$ as multigraphs with a given number of edges and having at most $m$-components correspond to the integer points of a base-polymatroid. 

\begin{theorem}[B{\'e}rczi et al.~\cite{BercziMV2020}]
\label{thm:gpolym}
Given an instance $(G, c, \rho)$, there is an algorithm that outputs a connected multigraph $X$ of cost at most the optimum, such that each vertex $v \in V$ has degree at least $\rho(v)-1$ in $X$.
The algorithm runs in time polynomial in $n$ and $\log \sum_{v \in V} \rho(v)$.\hfill\qed
\end{theorem}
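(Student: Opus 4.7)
The approach is to exhibit a polymatroid whose integer base points are exactly the edge-multiplicity vectors of multigraphs with at most $m$ components and the prescribed total number of edges, and then to optimize the linear cost over this polymatroid intersected with per-vertex degree constraints. This directly fleshes out the hint in the excerpt that ``multigraphs with a given number of edges and having at most $m$-components correspond to the integer points of a base-polymatroid''.

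First I would pin down the total edge count. Because $\ddelta_X(v) = \rho(v)$ at every vertex and each edge of $X$ contributes exactly $2$ to the sum of $\ddelta$-values (self-loops included), every feasible multigraph has exactly $k := \tfrac{1}{2}\sum_{v\in V}\rho(v)$ edges; one may assume $k\in\mathbb{Z}_{\geq 0}$, else the instance is infeasible. A multigraph on $V$ with $k$ edges has at most $m$ components iff its support contains a spanning $m$-forest of the underlying simple graph, i.e.\ at least $n-m$ edges that are non-redundant in the graphic-matroid sense.

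Next I would construct the polymatroid. On the ground set consisting of the pairs $\{uv:u\neq v\}$ together with self-loops $\{vv\}$, define a submodular rank function $f$ that combines the cardinality budget $k$ with the graphic-matroid rank $r_M$ of the underlying simple edges, so that the integer base polytope $B(f)$ consists precisely of the edge-multiplicity vectors $x\in\mathbb{Z}_{\geq 0}^E$ with $\sum_e x_e = k$ and $r_M(\mathrm{supp}(x))\geq n-m$. Submodularity and monotonicity of $f$ are inherited from the graphic matroid, and this is exactly the base-polymatroid the theorem refers to; its usage lets the $m=1$ statement of B{\'e}rczi et al.\ extend to arbitrary $m$ essentially verbatim.

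Finally I would impose the degree conditions and minimize the cost. The per-vertex maps $x\mapsto\ddelta_x(v)$ are linear, and the relaxation $\rho(v)-1\leq\ddelta_x(v)\leq\rho(v)$ is a width-$1$ box; minimizing $\sum_e c(e) x_e$ over $B(f)$ intersected with this box is a polymatroid/submodular-flow problem, for which an integer optimum exists and can be computed in strongly polynomial time in $n$ and $\log k$, working throughout with the compact edge representation so that the exponentially large multiplicities need never be written out. The main obstacle is controlling the relaxation: every feasible solution of the original equality-degree problem lies in the relaxed polytope, so the relaxed optimum is at most $\OPT$; conversely, any integer base of $B(f)$ satisfying the relaxed degree box is automatically a multigraph with at most $m$ components obeying $\ddelta_X(v)\geq\rho(v)-1$, which is precisely the conclusion of the theorem.
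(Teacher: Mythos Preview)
The polymatroid portion of your plan is on the right track and matches the one-line justification the paper gives: edge-multiplicity vectors of multigraphs with a fixed edge count $k$ and at most $m$ components are indeed the integer bases of a polymatroid (one can take $b(F)=\min\{k,\ k-(n-m)+r_M(F)\}$ with $r_M$ the graphic-matroid rank, and check submodularity and the equivalence directly).

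The gap is in how you handle the degree constraints. Your relaxed box $\rho(v)-1\leq\ddelta_x(v)\leq\rho(v)$, intersected with the base polytope $B(f)$, is not a relaxation at all: every $x\in B(f)$ has $x(E)=k$, hence $\sum_v\ddelta_x(v)=2k=\sum_v\rho(v)$, and together with the upper bounds $\ddelta_x(v)\leq\rho(v)$ this forces $\ddelta_x(v)=\rho(v)$ for every $v$, fractional or not. So your ``relaxed'' feasible region coincides with the exact one, and asking for an integer optimum there is the \textsc{Minimum Bounded Degree $m$-Component Multigraph} problem itself. For $\rho\equiv 2$ and $m=1$ this contains metric TSP, so no polynomial-time integrality or submodular-flow argument can deliver an exact integer optimum unless $\mathsf{P}=\mathsf{NP}$. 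More structurally, the constraints $\ddelta_x(v)\leq\rho(v)$ are linear side constraints on the overlapping edge sets $\delta(v)$ (with loops counted twice); intersecting a base polymatroid with such constraints does not remain in the polymatroid/submodular-flow world, and the resulting polytope is not integral in general.

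What the cited result of B{\'e}rczi et al.\ actually does is keep the degree equalities as \emph{side} constraints on top of the polymatroid LP and invoke the iterative-relaxation machinery for degree-bounded (g-)polymatroid elements (in the spirit of Kir\'aly--Lau--Singh): solve the LP, drop a degree constraint whose support structure guarantees at most an additive~$1$ violation, and repeat until only the bare polymatroid LP remains, which is integral. Because each edge variable participates in at most two degree constraints, the per-vertex violation is bounded by~$1$, yielding exactly the $\ddelta_X(v)\geq\rho(v)-1$ guarantee. Your plan should replace the unjustified ``an integer optimum exists'' step with this iterative argument; once that is done, the polymatroid construction and the cost comparison with $\OPT$ you outlined go through.
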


\subsection{Applying shortcuts efficiently}

  Our algorithms for the unrestricted and restricted MV-mTSP\0 problems will rely on shortcutting tours efficiently. 
  Let $X$ be a connected multigraph on vertex set $V(X)$ such that $\ddelta_X(v)$ is even for every vertex $v \in V(X)$. Furthermore, let $c$ be metric edge costs, and let $\rho(v)\leq\ddelta_X(v)$ be an even requirement for $v\in V(X)$.
  Our goal is to find a multigraph~$X'$ with cost at most $\cost(X)$ such that $\ddelta_{X'}(v) := \rho(v)$ holds for $v\in V(X)$.
  The difficulty comes from the fact that the degree surplus $\gamma(v) := \ddelta_X(v)-\rho(v)$ might not be polynomially bounded by the input size.
  \Cref{alg:shortcuts} finds a suitable multigraph $X'$ and its time complexity has a logarithmic dependence on the $\gamma$-values.
  
\begin{procedure}[!ht]
\refstepcounter{procedure}
\setcounter{procedure}{0}
  \label{alg:shortcuts}
  \caption{An efficient way of performing shortcuts}
  \begin{algorithmic}[1]
    \Statex \textbf{Input:} A connected multigraph $X$ with even degrees $\ddelta_X(v)$ and even degree requirement $\rho(v)$ for every $v \in V(X)$.
    \Statex \textbf{Output:} A connected multigraph $X'$ with degrees $\ddelta_{X'}(v) = \rho(v)$ for every $v \in V(X)$.
    \State  Use \textbf{CovertToSequence} from~\cite{GrigorievvandeKlundert2006} to calculate $\mathcal{C} = \{(C, \mu_C)\}$, a cycle decomposition of $X$.
    \State  Construct a multigraph $A$ with one copy of each cycle $C \in \mathcal{C}$.
    \State  Calculate an Eulerian trail $\eta$ in $A$.\label{st:shortcuts_trail}
    \State  Calculate an implicit Eulerian trail $\eta'$ in $X$.
    \State  Remove the last $\gamma(v) := \ddelta_X(v)-\rho(v)$ occurrences of $v$ in $\eta'$, denote the resulting multigraph by $X'$.
    \Statex  \textbf{return} $X'$
  \end{algorithmic}
\end{procedure}

\begin{lemma}
\label{lem:shortcuts}
  \Cref{alg:shortcuts} runs in time polynomial in $\abs{V(X)}$ and $\log \sum_v \gamma(v)$, where $\gamma(v) = \ddelta_X(v) - \rho(v)$.
\end{lemma}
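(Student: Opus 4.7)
My plan is to analyze the five steps of \Cref{alg:shortcuts} in turn and argue that each can be carried out in time polynomial in $|V(X)|$ and $\log\sum_v\gamma(v)$, using a compact representation of every multigraph and trail involved.

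By the results of \cite{GrigorievvandeKlundert2006}, Step~1 produces a cycle decomposition $\mathcal{C}$ consisting of at most polynomially many (in $|V(X)|$) distinct cycles $C$, each accompanied by a binary-encoded multiplicity $\mu_C$ of bit-length $O(\log\sum_v\ddelta_X(v))$, in time polynomial in this same parameter. Steps~2 and~3 then operate on the auxiliary graph $A$, which has only polynomially many edges since it contains one copy of each distinct $C\in\mathcal{C}$; hence Hierholzer's algorithm returns the Eulerian trail $\eta$ of $A$ in polynomial time. The implicit Eulerian trail $\eta'$ in $X$ of Step~4 is stored not as an explicit sequence of length potentially exponential in the input, but as the sequence of tokens $(C,\mu_C)$ obtained by replacing each appearance of a cycle $C$ in $\eta$ by the token ``traverse $C$ exactly $\mu_C$ times''. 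This token list has polynomially many entries, each of polynomial bit-length.

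The remaining step is the delicate one: Step~5 must remove the last $\gamma(v)$ occurrences of every $v\in V(X)$ from the exponentially long trail $\eta'$ without expanding its compact form. For each $v$ I would scan the token list of $\eta'$ from right to left while maintaining the cumulative number $s_v$ of occurrences of $v$ already accounted for. A token $(C,\mu_C)$ contributes $\mu_C\cdot n_v(C)$ to this count, where $n_v(C)$ is the number of appearances of $v$ in one traversal of $C$, a quantity readable from $C$ in $O(|V(X)|)$ time. The scan stops as soon as $s_v$ first reaches $\gamma(v)$, and at that moment an integer division in the current token determines how many full traversals of $C$ are kept, how many are dropped entirely, and how many appearances of $v$ remain to shortcut inside a single traversal; each such local shortcut replaces edges $uv,vw$ by $uw$ and so modifies only $O(1)$ entries of the compact edge-multiplicity table of the output multigraph $X'$.

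The main obstacle is that both the multiplicities $\mu_C$ and the deficits $\gamma(v)$ may be exponentially large, so every step must manipulate them as binary integers and never materialize the trail or the multigraph explicitly. Since the procedure above reduces to $O(1)$ additions, subtractions, comparisons and integer divisions per token per vertex, all performed on integers of bit-length $O(\log\sum_v\ddelta_X(v))$, and since there are only polynomially many tokens and $|V(X)|$ vertices, the total running time is $\mathrm{poly}(|V(X)|,\log\sum_v\gamma(v))$, as claimed.
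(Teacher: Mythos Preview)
Your proposal is correct and follows essentially the same approach as the paper: decompose $X$ into $O(|V(X)|^2)$ distinct cycles with multiplicities via \textbf{ConvertToSequence}, build the polynomial-size auxiliary graph $A$, obtain a compact ``token list'' representation of the Eulerian trail of $X$, and then perform the shortcutting for each vertex $v$ by scanning the tokens and doing arithmetic on the multiplicities rather than unrolling the trail. Your right-to-left scan with cumulative counts is exactly the paper's identification of the cutoff index $s'$ among the cycles $C^{(v)}_1,\dots,C^{(v)}_s$ containing $v$, phrased slightly more operationally; the two arguments are interchangeable. One small remark: you correctly note that the integers manipulated have bit-length $O(\log\sum_v\ddelta_X(v))$, which in general can exceed $\log\sum_v\gamma(v)$; the paper glosses over this too, and in all applications here the two quantities are of the same order, so the stated bound holds.
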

\begin{proof}
  Finding a cycle-decomposition of a circulation is an equivalent problem to decomposing the multigraph $X$ into cycles, and it can be done efficiently.
  One implementation of this result is the algorithm \textbf{ConvertToSequence} by Grigoriev and van de Klundert~\cite{GrigorievvandeKlundert2006}.
  We use this algorithm to decompose $X$ into a collection $\mathcal{C} = \{(C, \mu_C)\}$ of $k=O(n^2)$ different cycles with multiplicities; taking $\mu_C$ copies of each cycle $C$ gives back the edges of $X$.
  The output is calculated in $O(n^4)$ steps.

  The shortcutting procedure goes similarly to the one in the proof of Theorem~1 in B{\'e}rczi et al.~\cite{BercziMV2020}.
  However, this time extra attention is needed, as the total surplus is not bounded by a polynomial of the input size.
  Let us construct a multigraph~$A$ on the vertex set $V(X)$ by taking \textit{one} copy of each cycle $C$ from $\mathcal{C}$.
  Now calculate an Eulerian trail $\eta$ in $A$; this is possible to do since the degrees in $A$ are even, moreover, this can be done in polynomial time due to the polynomial size of $A$ using the approach of Hierholzer~\cite{Fleischner1991,Hierholzer1873}. 
  One can obtain an (implicit) Eulerian trail $\eta'$ of $X$ from $\eta$ the following way.
  Start at the first vertex of~$\eta$, denoted by~$v_1$, and traverse the vertices of $\eta$ in order.
  Every time a previously unvisited vertex~$w$ is reached, traverse every cycle $C$ containing~$w$, $\mu_C$ times; in this case we say that a cycle~$C$ is \textit{rooted} at $w$.
  The Eulerian trail $\eta'$ of $X$ can be described implicitly by listing the cycles traversed in order, along with the vertices they are rooted at:
  \begin{equation}
  \label{eq:implicit_Eulerian}
  v_1 - (C_1, \mu_{C_1}) - v_2 - (C_2, \mu_{C_2}) - \dots - v_k - (C_k, \mu_{C_k}) \enspace .
  \end{equation}
  For the sake of simplicity we index the cycles and vertices based on their traverse order in the sequence.
  Note that the vertices $v_i$ might repeat multiple times, as different cycles can be rooted at the same vertex.
  Also note that the description in \Cref{eq:implicit_Eulerian} consists of $O(n^2)$ elements of size $O(n \log r(V))$, therefore it can be regarded as a compact representation of $\eta'$.
    
  Now we turn to taking shortcuts.
  Denote the visit surplus of a vertex $v$ by $\gamma(v)$, that is, $\gamma(v) := \ddelta_X(v) - \rho(v)$.
  Take a vertex $v \in V$, with $\gamma(v) > 0$, and let $\mathcal{C}^{(v)} := \{(C_1^{(v)}$, $\mu_{C_1^{(v)}})$, \dots, $(C_s^{(v)}, \mu_{C_s^{(v)}})\}$ denote the $s$ unique cycles from $\mathcal{C}$ with multiplicities, that contain the vertex $v$, in the order these cycles appear in~$\eta'$.
  Now apply shortcuts to the last $\gamma(v)$ cycles from $\mathcal{C}^{(v)}$: suppose~$s'$ is the first index, such that $\mu_{C_{s'+1}^{(v)}} + \dots + \mu_{C_{s}^{(v)}} < \gamma(v)$, i.e.\ the last $\gamma(v)$ cycles of~$\mathcal{C}^{(v)}$ consists of \textit{some} copies of cycles $C_{s'}^{(v)}$ and \textit{all} copies of cycles $C_{s'+1}^{(v)}, \dots, C_{s}^{(v)}$.
  Applying one shortcut means either (1) removing self loop $vv$, or (2) replacing 1-1 copy of the edges $uv$ and~$vw$ by $uw$ for some preceding vertex~$u$ and subsequent vertex $w$ ($u$ and $w$ might be the same vertex).
  Perform these operations in $X$, on the edges of all copies of cycles $C_{s'+1}^{(v)}, \dots, C_{s}^{(v)}$ and some copies of cycle~$C_{s'}^{(v)}$, such that the total number of shortcuts taken is $\gamma(v)$.

  Since the number of unique cycles in~$\mathcal{C}$ is $O(n^2)$, decreasing the degree of vertex $v$ by $\gamma(v)$ using operations 1) and 2) can be done by modifying $O(n^2)$ values with a total amount of $O(\log r(v))$.
  Lastly, calculate an Eulerian trail $\eta'$ on the updated multigraph $X'$ the way described above.
  Perform these steps for every vertex $v \in V$ with a positive $\gamma(v)$ value, in a total of $O(n^3 \log \gamma(v))$ steps.
  
  In the case the tours contain depots, one can drop all but one depots from a component by applying shortcuts in the cycles containing those depots, as there is a polynomial number of such cycles.
  Moreover, one can also make sure that they do not remove all depots from a component, by using a depot as a starting vertex when calculating the Eulerian trail in \Cref{st:shortcuts_trail} of the procedure.
\end{proof}

\subsection{$2$-approximation for unrestricted MV-mTSP with empty tours}

\begin{algorithm}[!ht]
  \caption{A $2$-approximation for the unrestricted MV-mTSP\0 \label{alg:mamvtsp0_2apx}}
  \begin{algorithmic}[1]
    \Statex \textbf{Input:} A complete undirected graph $G(V,E)$, costs $c:E\rightarrow\mathbbm{R}_{\geq 0}$ satisfying the triangle inequality, requests $r:V\rightarrow\mathbbm{Z}_{\geq 1}$, number of agents $m$.
    \Statex \textbf{Output:}  $m$ MVTSP tours that visit each $v \in \V$ a total of $r(v)$ times. 
    \State  Use \Cref{thm:gpolym} to obtain a multigraph $X'$ on $V$ having $r(V)$ edges and at most $m$ components, such that $\ddelta_{X'}(v) \geq 2 \cdot r(v) - 1$ for all $v \in V$.\label{st:mamvtsp0_2_kcomp}
    \State  Duplicate all edges in $X'$, denote the resulting multigraph by $X$.\label{st:mamvtsp0_2_double}
    \State  Apply shortcuts in each component of $X$ until the degree of every node $v \in V$ is~$2\cdot r(v)$, using \Cref{alg:shortcuts}.\label{st:mamvtsp0_2_shortcut}
    \Statex  \textbf{return} $X$
  \end{algorithmic}
\end{algorithm}

\begin{theorem}
\label{thm:mamvtsp0_2apx}
\Cref{alg:mamvtsp0_2apx} provides a $2$-approximation for unrestricted MV-mTSP\0 in time polynomial in $n$, $m$ and $\log r(V)$.
\end{theorem}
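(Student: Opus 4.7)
The plan is to verify three properties in turn: feasibility, the approximation ratio, and the running time, all following the structure of the algorithm.

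For feasibility, I would first apply \Cref{thm:gpolym} with degree requirement $\rho(v) := 2r(v)$ and the at-most-$m$-components constraint (using the base-polymatroid extension mentioned in the statement) to obtain the multigraph $X'$ with $r(V)$ edges, at most $m$ components, and $\ddelta_{X'}(v) \geq 2r(v)-1$ for every $v$. Doubling the edges produces $X$ with the same set of components, every degree even, and $\ddelta_X(v) \geq 4r(v)-2$. Since both $\ddelta_X(v)$ and $2r(v)$ are even, the surplus $\gamma(v) = \ddelta_X(v) - 2r(v)$ is a non-negative even integer, so \Cref{alg:shortcuts} is applicable on each component and, after application, delivers a multigraph where every vertex has degree exactly $2r(v)$. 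The shortcutting step only replaces edges $uv,vw$ by $uw$ (or removes loops), which cannot disconnect a component, so the number of components remains at most $m$. Since $r(v)\geq 1$, every vertex retains degree at least $2$ and no vertex becomes isolated. Each resulting component is then connected with all even degrees, hence Eulerian, and therefore decomposes into a closed walk visiting each of its vertices $v$ exactly $r(v)$ times; this is a valid MVTSP tour, and we obtain at most $m$ such tours as required by the unrestricted MV-mTSP\0.

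For the approximation ratio, the key observation is that any feasible solution to the unrestricted MV-mTSP\0 instance is itself a feasible multigraph for the Minimum Bounded Degree $m$-Component Multigraph problem with $\rho(v) = 2r(v)$: it consists of at most $m$ MVTSP tours, so it has at most $m$ components, and by definition each vertex $v$ has degree exactly $2r(v)$. Hence the optimum of that relaxation is upper bounded by $\OPT$, and \Cref{thm:gpolym} guarantees $\cost(X') \leq \OPT$. Doubling in \Cref{st:mamvtsp0_2_double} gives $\cost(X) = 2\cost(X') \leq 2\,\OPT$. Finally, the shortcutting in \Cref{st:mamvtsp0_2_shortcut} only removes self-loops or replaces $uv,vw$ with $uw$, and by the triangle inequality (together with non-negativity of self-loop costs) the cost does not increase, yielding $\cost(X) \leq 2\,\OPT$ at termination.

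For the running time, \Cref{thm:gpolym} produces $X'$ in time polynomial in $n$ and $\log \sum_v 2r(v) = O(\log r(V))$. Doubling is an $O(n^2)$ operation on the compact representation. Before shortcutting we have $\gamma(v) \leq \ddelta_X(v) \leq 2\cdot 2r(V)$, so $\log \sum_v \gamma(v) = O(\log r(V))$, and by \Cref{lem:shortcuts} the shortcutting step runs in time polynomial in $n$ and $\log r(V)$. Combined with an outer factor polynomial in $m$ for bookkeeping across components, the overall running time is polynomial in $n$, $m$, and $\log r(V)$.

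The main subtlety I would want to treat carefully is that \Cref{alg:shortcuts} is applied componentwise and that each individual shortcut respects both the parity and the component structure. In particular, because the shortcutting procedure operates on a cycle decomposition of one connected Eulerian multigraph at a time, replacing two edges at a vertex by a single edge never splits a component; this is what lets me claim that the number of components never grows beyond $m$ during \Cref{st:mamvtsp0_2_shortcut}. Everything else in the argument is essentially a direct combination of \Cref{thm:gpolym}, edge doubling, and the shortcut procedure.
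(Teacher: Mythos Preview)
Your proposal is correct and follows essentially the same approach as the paper's own proof: apply \Cref{thm:gpolym} with $\rho(v)=2r(v)$ to obtain $X'$ of cost at most $\OPT$, double the edges, and shortcut back to the correct degrees using \Cref{alg:shortcuts}, arguing feasibility, cost, and running time in the same order. If anything, you are slightly more explicit than the paper in justifying why $\cost(X')\leq\OPT$ (by noting that any feasible unrestricted MV-mTSP\0 solution is itself feasible for the bounded-degree multigraph problem) and in discussing why shortcutting preserves the component structure.
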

\begin{proof}
  The proof goes as follows.
\paragraph{Feasibility.}
  In \Cref{st:mamvtsp0_2_kcomp}, we calculate an approximate solution to the \textsc{Minimum Bounded Degree $m$-component Multigraph} problem on $V$ with $\rho(v) = 2 \cdot r(v)$ for all $v \in V$.
  The resulting multigraph $X'$ has at most $m$ components and has degree at least $2 \cdot r(v)-1$ for every $v \in V$.  
  After doubling the edges, $X$ still has at most $m$ components, and has an even degree at least $4 \cdot r(v) - 2\geq 2\cdot r(v)$ for every $v \in V$.
After taking shortcuts using \Cref{alg:shortcuts}, the number of components remains at most $m$, each vertex $v \in V$ being visited exactly $r(v)$ times in total.
    
\paragraph{Cost of solution.}
  Let us fix an optimal solution $\X$.
  The cost of $X'$ is at most $\cost(\X)$, therefore the cost of~$X$ is at most $2 \cdot \cost(\X)$.    
    Applying shortcuts in \Cref{st:mamvtsp0_2_shortcut} cannot increase the cost of $X$ as the edge costs satisfy the triangle inequality.

\paragraph{Complexity analysis.}
  Due to \Cref{thm:gpolym}, \Cref{st:mamvtsp0_2_kcomp} can be done in polynomial time.
  The graph operations in \Cref{st:mamvtsp0_2_double} can be done efficiently.
 \Cref{st:mamvtsp0_2_kcomp} constructs a multigraph with $r(V)$ edges in total, hence, after duplicating the edges in \Cref{st:mamvtsp0_2_double} of \Cref{alg:mamvtsp0_2apx}, the resulting multigraphs have a total degree surplus of $r(V)$.  
   We can use \Cref{alg:shortcuts} to the components of the multigraph calculated in \Cref{st:mamvtsp0_2_double} of \Cref{alg:mamvtsp0_2apx} to decrease the total degree of each vertex $v$ to $2 \cdot r(v)$.
  According to \Cref{lem:shortcuts}, this operation can be done in time polynomial in $\abs{V}$,~$m$ and $\log \gamma(V)$, which implies a time complexity polynomial in $n$, $m$ and $\log r(V)$.
\end{proof}

\subsection{$2$-approximation for MV-mTSP with empty tours}

\begin{algorithm}[!ht]
  \caption{A $2$-approximation for the MV-mTSP\0 \label{alg:mdmvtsp0_2apx}}
  \begin{algorithmic}[1]
    \Statex \textbf{Input:} A complete undirected graph $G(V,E)$, costs $c:E\rightarrow\mathbbm{R}_{\geq 0}$ satisfying the triangle inequality, depot set $D \subseteq V$, number of agents $m=\abs{D}$, requests $r:\V\rightarrow\mathbbm{Z}_{\geq 1}$.
    \Statex \textbf{Output:} $m$ MVTSP tours that visit each $v \in \V$ a total of $r(v)$ times.
    \State  Identify all depots $d \in D$ into one meta-depot $\hat{d}$.
            Let $\hat{V} := \V \cup \{\hat{d}\}$ and $\hat{E} := \hat{V} \times \hat{V}$.
            Set $c_{\hat{d}, v} = \min_{d \in D} c_{d, v}$ for every vertex $v \in V$, and let $d(v) := \argmin_{d \in D} c_{d, v}$. 
            Set $\hat{r}(v) = r(v)$ for $v \in \hat{V} \rep \{\hat{d}\}$.\label{st:mdmvtsp_2_metadepot}
    \For {$\mu = 1, \dots, m$}
    \State  Set $\hat{r}(d) := \mu$.
    \State  \begin{varwidth}[t]{0.9\textwidth}
    Use \Cref{thm:gpolym} to obtain a connected multigraph $\hat{X}\k$ on $\hat{V}$ having $\hat{r}(\hat{V})$ edges, such that $\ddelta_{\hat{X}\k}(v) \geq 2 \cdot \hat{r}(v) - 1$ for all $v \in \hat{V}$.\label{st:mdmvtsp_2_kcomp}
    \end{varwidth}
    \State  Duplicate all edges in $\hat{X}\k$.\label{st:mdmvtsp_2_double}
    \State  \begin{varwidth}[t]{0.9\textwidth}
    Replace $\hat{d}$ with $d_1, \dots, d_m$ and all copies of edges $(\hat{d}, v)$ with $(d(v), v)$ for $v \in V$ in $\hat{X}\k$; denote the resulting multigraph by $X\k$.\label{st:mdmvtsp_2_expand}
    \end{varwidth}
    \State  \begin{varwidth}[t]{0.9\textwidth}
    If there is a component in $X\k$ with more than one depot $d \in D$, disconnect depots until there is only one depot per component, using shortcuts.\label{st:mdmvtsp_2_shortcut1}
    \end{varwidth}
    \State \begin{varwidth}[t]{0.9\textwidth}
    Apply \Cref{alg:shortcuts} in each non-trivial component of $X\k$ until the degree of every vertex $v \in \V $ becomes $r(v)$.\label{st:mdmvtsp_2_shortcut2}
    \end{varwidth}
    \EndFor
  \Statex  \textbf{return} the cheapest $X\k$.
  \end{algorithmic}
\end{algorithm}

\begin{theorem}
\label{thm:mdmvtsp0_2apx}
\Cref{alg:mdmvtsp0_2apx} provides a $2$-approximation for MV-mTSP\0 in time polynomial in $n$, $m$ and $\log r(V)$.
\end{theorem}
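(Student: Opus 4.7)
The plan is to mirror the three-part structure of the proof of \Cref{thm:mamvtsp0_2apx}: establish feasibility of the returned multigraph, bound its cost via a well-chosen iteration, and analyze the running time.

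\textbf{Feasibility.} I would track the multigraph through the loop. After \Cref{st:mdmvtsp_2_kcomp}, \Cref{thm:gpolym} guarantees that $\hat X^{(\mu)}$ is connected on $\hat V$ and satisfies $\ddelta(v)\geq 2\hat r(v)-1$ for every $v\in\hat V$; \Cref{st:mdmvtsp_2_double} then makes all degrees even and at least $2\hat r(v)$. The key observation in \Cref{st:mdmvtsp_2_expand} is that every non-depot $v$ has a path to $\hat d$ in $\hat X^{(\mu)}$, whose last edge is reassigned by the expansion to the real depot $d(v')$ of its penultimate non-depot $v'$. Hence every component of $X^{(\mu)}$ is either an isolated depot or contains at least one depot together with some non-depots. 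Since each of the $m$ real depots lies in exactly one component, the total number of components is already at most $m$ after \Cref{st:mdmvtsp_2_expand}. \Cref{st:mdmvtsp_2_shortcut1} then reduces each proper component to a single depot using cycle-decomposition shortcuts (bringing the component count to exactly $m$), and \Cref{st:mdmvtsp_2_shortcut2} applies \Cref{alg:shortcuts} inside each non-trivial component to drive the degree of every non-depot to $2r(v)$ while preserving its connectivity.

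\textbf{Cost bound.} I would focus on the iteration $\mu=\mu^\star$, where $\mu^\star$ is the number of non-empty tours in an optimal solution $X^\star$. Contracting in $X^\star$ the $\mu^\star$ depots that appear in non-empty tours into $\hat d$ produces a connected multigraph $\hat X^\star$ on $\hat V$: by \Cref{lem:depotonce} each such depot has degree $2$, so $\ddelta_{\hat X^\star}(\hat d)=2\mu^\star$ and $\ddelta_{\hat X^\star}(v)=2r(v)$ for non-depots, which makes $\hat X^\star$ a feasible candidate for the multigraph problem solved in \Cref{st:mdmvtsp_2_kcomp} of iteration $\mu^\star$. Because $c_{\hat d,v}=\min_d c_{d,v}\leq c_{d,v}$, the contraction only decreases the cost, giving $\cost(\hat X^\star)\leq\OPT$, and \Cref{thm:gpolym} then yields $\cost(\hat X^{(\mu^\star)})\leq\OPT$. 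Doubling in \Cref{st:mdmvtsp_2_double} doubles the cost to at most $2\OPT$; the reassignment in \Cref{st:mdmvtsp_2_expand} preserves the cost exactly because $d(v)=\argmin_d c_{d,v}$; and the shortcut steps only decrease cost since $c$ is metric. Returning the cheapest $X^{(\mu)}$ across all $m$ iterations therefore yields a solution of cost at most $2\OPT$.

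\textbf{Running time and main obstacle.} The running time bound follows by combining the polynomial guarantees of \Cref{thm:gpolym} and \Cref{lem:shortcuts} across the $m$ iterations, with all remaining graph operations performed on the compact representation in $O(n^2\log r(V))$ time. I expect the main obstacle to lie in \Crefrange{st:mdmvtsp_2_expand}{st:mdmvtsp_2_shortcut2}: simultaneously accounting for the component count after the meta-depot is split, removing a potentially exponential number of excess depot visits in \Cref{st:mdmvtsp_2_shortcut1}, and correcting the non-depot degrees in \Cref{st:mdmvtsp_2_shortcut2}, all in polynomial time and without increasing the cost. The cycle-decomposition machinery developed for \Cref{lem:shortcuts} is tailored for exactly this combination, so the delicate work is to verify that its application handles both the depot disconnection and the subsequent degree correction cleanly, with the starting-vertex choice at a depot ensuring that no component is stripped of all its depots.
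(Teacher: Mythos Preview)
Your proposal is correct and follows essentially the same three-part approach as the paper's proof. Your cost-bound argument is in fact more explicit than the paper's: where the paper simply asserts that $\cost(\hat X^{(m^\star)})\leq\cost(X^\star)$ for $m^\star$ the number of components of $X^\star$, you spell out the contraction of the used depots into $\hat d$, verify via \Cref{lem:depotonce} that the resulting multigraph meets the degree requirements of iteration $\mu^\star$, and use $c_{\hat d,v}=\min_d c_{d,v}$ to conclude the cost comparison.
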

\begin{proof}
  The proof goes as follows.
\paragraph{Feasibility.}
  For every $\mu = 1, \dots, m$, calculate an approximate solution to the \textsc{Minimum Bounded Degree $1$-component Multigraph} problem on $\hat{V}$ with $\rho(v) = 2 \cdot \hat{r}(v)$ for all $v \in \hat{V}$.
  The resulting multigraph $\hat{X}\k$ is connected and has degree at least $2 \cdot \hat{r}(v)-1$ for every $v \in \hat{V}$.  
  After doubling the edges, $\hat{X}\k$ remains connected, and every vertex $v\in \V$ has an even degree at least $4\cdot \hat{r}(v)-2\geq 2\cdot r(v)$.
  By replacing~$\hat{d}$ with $d_1, \dots, d_\mu$, the resulting multigraph has at most $\mu$ components. After disconnecting depots and taking shortcuts, the number of components will be exactly $m$ with exactly one depot in each component.
  Furthermore, each vertex $v\in V$ is visited exactly $r(v)$ times in total.
    
\paragraph{Cost of solution.}
  Let us denote by $m^\star$ the number of components in the optimal solution~$\X$, where $1\leq m^\star\leq m$. 
  The multigraph~$\hat{X}\k$ obtained in \Cref{st:mdmvtsp_2_kcomp} for $\mu=m^\star$ has cost at most $\cost(X^\star)$, therefore after doubling all its edges in \Cref{st:mdmvtsp_2_double} the resulting multigraph $X\k$ has cost at most $2 \cdot \cost(X^\star)$.
  \Cref{st:mdmvtsp_2_expand} does not change the cost of $X\k$ as each edge $(d', v)$ is replaced with an edge $(d(v), v)$ of identical cost.
  Finally, the disconnections and shortcuts in \Cref{st:mdmvtsp_2_shortcut1,st:mdmvtsp_2_shortcut2} cannot increase the cost of $X\k$ as the costs satisfy the triangle inequality.
    
\paragraph{Complexity analysis.}
Due to \Cref{thm:gpolym}, \Cref{st:mdmvtsp_2_kcomp} can be done in polynomial time.
  The graph operations in \Cref{st:mdmvtsp_2_metadepot,st:mdmvtsp_2_double,st:mdmvtsp_2_expand} can be done efficiently.
 \Cref{st:mdmvtsp_2_kcomp} constructs a multigraph with $\hat{r}(\hat{V})\leq r(V)+2\cdot m$ edges in total, hence, after duplicating the edges in \Cref{st:mdmvtsp_2_double} of \Cref{alg:mdmvtsp0_2apx}, the resulting multigraph have a total degree surplus of $r(V)$.  
   We can use \Cref{alg:shortcuts} to the components of the multigraph calculated in \Cref{st:mdmvtsp_2_double} of \Cref{alg:mdmvtsp0_2apx} to decrease the total degree of each vertex $v$ to $2 \cdot r(v)$.
  According to \Cref{lem:shortcuts}, this operation can be done in time polynomial in $\abs{V}$,~$m$ and $\log \gamma(V)$, which implies a time complexity polynomial in $n$, $m$ and $\log r(V)$.
\end{proof}

\section{Discussion and open problems}
In this paper, we introduced different generalizations of the multiple TSP by considering many visits (possibly exponentially many), and provided polynomial-time constant approximation algorithms for each of them.
We started by $4$- and $3$-approximations for the unrestricted and restricted variants with arbitrary tours, respectively. 
The proofs rely on the idea of first building a constrained spanning forest, then adding self-loops in the unrestricted setting or adding a transportation problem solution in the restricted setting.
We then provided $4$-approximations for the disjoint tours variants, relying again on the addition of self-loops.
Finally we gave $2$-approximation algorithms for the empty tours variants, that calculate a constrained multigraph using the algorithm from~\cite{BercziMV2020}, followed by an edge-doubling and an efficient shortcutting procedure.

Several interesting open questions remain.

\paragraph{Improving the constant factors.}
Given the $\nicefrac32$-approximation to the TSP by Christofides~\cite{Christofides1976} and Serdyukov~\cite{Serdyukov1978}, and the recent improvement by Karlin et al.~\cite{KarlinKG2021}, a major open problem is to give an approximation algorithm for any of the multiple-agent or multiple-depot problem variants with a ratio strictly better than $2$.
The MV-mTSP with empty tours is a more general problem than the multidepot mTSP considered by Rathinam et al.~\cite{RathinamSD2007} and Xu et al.~\cite{XuXR2011}.
By \Cref{lem:sumc_reduction}, any improvement to the MV-mTSP with arbitrary tours or disjoint tours would imply an improved algorithm for the MV-mTSP with empty tours.
This means that a better than $2$-approximation for any MV-mTSP variants would translate to a better than $2$-approximation for the multidepot mTSP, which would be a major breakthrough.

Another open problem is whether we can generalize the $\nicefrac32$-approximation by Xu and Rodrigues~\cite{XuRodrigues2017}, that runs in polynomial time for fixed $m$, to the many-visits setting.
Adapting the g-polymatroidal approach from B{\'e}rczi et al.~\cite{BercziMV2020} (in a similar fashion to \Cref{alg:mdmvtsp0_2apx}) is not straightforward.
The main difficulty lies in generalizing the edge exchange subroutine by Xu and Rodrigues~\cite{XuRodrigues2017} from the single-visit case to the many-visits case.
In that routine, an edge exchange step removes an edge from a spanning forest $F$ and adds another edge such that the resulting graph will be a spanning forest as well.
In case of spanning forests (which are used for the single-visits case), a degree of $1$ for a vertex is sufficient, as adding a matching contributes enough degree to obtain a feasible tour.
In case of multigraphs, however, repeated edge exchange steps can decrease the degree of a vertex so much that a matching would not be sufficient.

\paragraph{Path variants.}
  Both the path TSP~\cite{TraubVygen2019,Zenklusen2019,TraubVZ2020} and the many-visits path TSP~\cite{BercziMV2020} have received reasonable attention recently, with the best polynomial-time approximation ratio being~$\nicefrac32$.
  The multiple agent variants, referred to as multiple-depot multiple-terminal Hamiltonian path problem admit a $2$-approximations~\cite{RathinamSengupta2006,BaeRathinam2012}, a $(2-\frac1m)$-approximation when the terminals are not fixed~\cite{YangLY2021}, and a $(2-\frac{1}{2m+1})$-approximation~\cite{YangLiu2019} when empty tours are allowed.
  An interesting direction of future research could be characterising the path versions of the different many-visits mTSP variants.
  
  The $4$\hyp{}approximations provided in \Cref{alg:mamvtsp_4apx,alg:mdmvtsp_4apx} seems to be simple enough to be adapted to $4$-approximations for the path versions of many-visits mTSP.
  These modifications would be using Hamiltonian paths instead of cycles in \Cref{alg:mamvtsp_4apx} and finding a constrained spanning forest with exactly one depot and one terminal in each tree using a matroid intersection algorithm~\cite{BaeRathinam2012,RathinamSengupta2006} in \Cref{alg:mdmvtsp_4apx}.
  However, for the other problem variants, adapting the existing algorithms is not straightforward.
  The arguments in \Cref{clm:three_four,clm:seven_eight} do not carry over for the path case, so approximating the path versions of problems P3--P4 and P7--P8 might require approaches completely different from the ones presented in this paper.

\paragraph{Min-max variants.}
  Throughout the paper, the objective was to minimize the total cost of the tours. However, it would be equally natural to consider minimizing the longest (most expensive) tour of an agent.
  The single-visit counterparts of such problems have been referred to as Min-Max Cycle Cover and Min-Max Routing Problems.
  The first constant-factor approximation result is due to Frederickson et al.~\cite{FredericksonHK1978}, and many other followed considering problems without depots~\cite{KhaniSalavatipour2014,XuLL2015,ArkinHL2006,YuLiu2016}, as well as with depots with different restrictions on the tours~\cite{XuXZ2012,XuLL2015,EvenGKRS2004,Jorati2013,YuLiu2016}.
  The approximation guarantees of these approaches range from $5/2$ to $8+\eps$.
  
  The many-visits counterparts of these problems also have connections to scheduling theory, in particular, they can be regarded as high-multiplicity scheduling problems with sequence-depending setup times, where the objective is to minimize the makespan of the schedule, commonly denoted by $C_{\max}$.

{\small
\paragraph{Acknowledgements.}
This research was supported by DAAD with funds of the Bundesministerium f{\"u}r Bildung und Forschung (BMBF).
Krist\'of B\'erczi was supported by the J\'anos Bolyai Research Fellowship and the Lend\"ulet Programme of the Hungarian Academy of Sciences -- grant number LP2021-1/2021, by the Hungarian National Research, Development and Innovation Office -- NKFIH, grant number FK128673, and by the Thematic Excellence Programme -- TKP2020-NKA-06 (National Challenges Subprogramme).
Roland Vincze was partially supported by DFG grant MO 2889/3-1 and MN 59/4-1.}

\bibliographystyle{abbrvnat}
\bibliography{bibliography} 

\end{document}